\newcommand{\BIG}{\bBigg@{3}}
\newcommand{\BIGG}{\bBigg@{4}}
\newtheorem{proposition}{Proposition}
\newtheorem{remark}{Remark}
\newtheorem{corollary}{Corollary}
\begin{document}
	
	\title{On the Secrecy Performance and Power Allocation in Relaying Networks with Untrusted Relay in the Partial Secrecy Regime}
	\author{\normalsize Diana Pamela Moya~Osorio,~\IEEEmembership{\normalsize Member,~IEEE},~Hirley~Alves,~\IEEEmembership{\normalsize Member,~IEEE}, and Edgar~Eduardo~Benitez~Olivo,~\IEEEmembership{\normalsize Member,~IEEE}
		\thanks{D.~P.~Moya~Osorio and H.~Alves are with the Centre for Wireless Communications (CWC), University of Oulu, Finland (e-mail: \{diana.moyaosorio; hirley.alves\}@oulu.fi).}
		\thanks{D.~P.~Moya~Osorio is also with the Department of Electrical Engineering, Center of Exact Sciences and Technology, Federal University of S\~{a}o Carlos, 13565-905 S\~{a}o Carlos, Brazil (e-mail: dianamoya@ufscar.br).}
		\thanks{ E.~E.~Benitez~Olivo is with S\~{a}o Paulo State University (UNESP), Campus of S\~{a}o Jo\~{a}o da Boa Vista, 13876-750 S\~{a}o Jo\~{a}o da Boa Vista, Brazil (e-mail: edgar.olivo@unesp.br).}
		\thanks{This work was supported in part by S\~{a}o Paulo Research Foundation (FAPESP) under Grant 2017/20990-6, in part by the Academy of Finland 6Genesis Flagship under Grant 318927, in part by the EE-IoT project under Grant 319008, and in part by the Brazilian National Council for Scientific and Technological Development (CNPq) under Grants 428649/2016-5 and 421850/2018-3.}
	}
	
	%\author{\normalsize D.~P.~Moya~Osorio,~\IEEEmembership{\normalsize Member,~IEEE}, E.~E.~Benitez~Olivo,~\IEEEmembership{\normalsize Member,~IEEE},~H.~Alves}
	%	\thanks{D.~P.~Moya~Osorio is with the Department of Electrical Engineering, Center of Sciences and Technology, Federal University of S?o Carlos, S?o Carlos, SP, 13565-905 Brazil. (e-mail: dianamoya@ufscar.br)}
	%	\thanks{E.~E.~Benitez~Olivo  is with UNESP - Univ Estadual Paulista, S?o Jo?o da Boa Vista, SP, 13876-750, Brazil. (e-mail: edgar.olivo@sjbv.unesp.br).}
	%	\thanks{H.~Alves is with the Department of Communications Engineering, University of Oulu, Finland. (e-mail: \{halves, matla\}@ee.oulu.fi).}
	
	\maketitle
	
	\begin{abstract}
		Recently, three useful secrecy metrics based on the partial secrecy regime were proposed to analyze secure transmissions on wireless systems over quasi-static fading channels, namely: generalized secrecy outage probability, average fractional equivocation, and average information leakage. These metrics were devised from the concept of fractional equivocation, which is related to the decoding error probability at the eavesdropper, so as to provide a comprehensive insight on the practical implementation of wireless systems with different levels of secrecy requirements. Considering the partial secrecy regime, in this paper we examine the secrecy performance of an amplify-and-forward relaying network with an untrusted relay node, where a destination-based jamming is employed to enable secure transmissions. In this regard, a closed-form approximation is derived for the generalized secrecy outage probability, and integral-form expressions are obtained for the average fractional equivocation and the average information leakage rate. Additionally, equal and optimal power allocation schemes are investigated and compared for the three metrics. From this analysis, we show  that different power allocation approaches lead to different system design criteria. The obtained expressions are validated via Monte Carlo simulations.
	\end{abstract}
	
	\begin{IEEEkeywords}
		Amplify-and-forward, average fractional equivocation, average information leakage rate, generalized secrecy outage probability,  partial secrecy regime, power allocation, untrusted relay.
	\end{IEEEkeywords}
	
	\section{Introduction}
	\label{sec:intro}
	%%%%%%%%%%%%%%%%%%%%%%%%%%%%%%%%%%%%%%%%%%%%%%%%
	Novel applications and services are envisioned with the implementation of the fifth generation (5G) of wireless networks, which are highly demanding in  terms  of  reliability, latency,  energy efficiency, spectrum  efficiency,   flexibility,  and  connection  density. Thus, the International Telecommunications Union (ITU) has categorized 5G services into three broad groups: enhanced mobile broadband (eMBB), massive machine-type communication (mMTC), and ultra-reliable and low-latency communication (URLLC)~\cite{art:3gpp}, which will impulse the progressive implementation of the Internet of Things (IoT) paradigm. Considering these three service categories, it can be foreseen that the upcoming applications for 5G and beyond are extremely vulnerable to security breaches.
	
	Traditionally, network security is provided by bit-level cryptographic techniques and  the corresponding  protocols  at the different layers of the  data  processing  stack. However, these solutions  present some weaknesses regarding public  wireless  networks involving  restrictions and high costs for the users. Therefore, traditional approaches are not sufficient for guaranteeing confidentiality in 5G networks and beyond~\cite{art:wu2018}. A new paradigm for providing enhanced security in wireless networks is referred to as physical layer security (PLS), which can potentially offer secure transmissions by efficiently exploiting the properties of wireless medium (fading, interference, and diversity)\cite{art:poor2017}. 
	%The basic idea behind PLS techniques is to degrade the channel for eavesdroppers, thus preventing them from
	%gaining information about the confidential messages from the received signal. In this way, PLS techniques can offer an additional level of security, which, integrated with traditional techniques, can safeguard the highly-sensitive data expected to be transmitted over future networks~\cite{art:yang2015,art:poor2017}.
	
	In 1949, Shannon introduced the concepts on secrecy transmissions from the information theoretic perspective, when he proposed the so-called cypher system in his pioneering work in~\cite{art:shannon1949}. In that system, it is considered a noiseless channel where a transmitter (Alice) intends to communicate with a legitimate receiver (Bob), by sharing a secret key $K$ that Alice uses to encrypt a message $M$ into a codeword $X$, in order to  maintain this in secret from an eavesdropper (Eve) that intercept the message. Therein, it was defined the perfect secrecy as the condition of $X$ revealing no information about $M$, i.e., the mutual information $I(M;X)=0$, so that $M$ and $X$ must be statistically independent, thus the best that Eve can do is to guess the transmitted message.
	
	Later, in 1975, Wyner showed that secrecy can be attained by exploiting the qualities of the channels without the need of a shared key~\cite{art:wyner}. In that work, it was proposed the discrete memoryless wiretap channel, where Alice must encode $M$ into a $n$-length codeword $X^n$, while $Y^n$ and $Z^n$ are the outputs at Bob's and Eve's channels, respectively. Therein, Wyner defined the concept of weak secrecy, which establishes that the statistical independence between $M$ and $Z^n$ is only required asymptotically in the block length $n$, i.e., when $\lim_{n\rightarrow\infty} ({1}/{n}) I(M;Z^n)$$=$$0$, thus describing the largest rate at which Eve gains no information about $M$ by observing $Z^n$. That definition was strengthen in~\cite{art:maurer2000} by disregarding the term $1/n$ in the definition of weak secrecy, thus defining the concept of strong secrecy, and the intuition is that the information leaked to Eve vanishes as $n$ approaches $\infty$. Besides, in~\cite{art:wyner}, the so-called secrecy capacity was defined as the maximal rate at which both reliability and security can be achieved, thus characterizing a rate-equivocation region, where the rate
	corresponds to the rate of reliable communication between Alice and Bob, and the equivocation defines the uncertainty about the message received by Eve. However, that definition requires that the Bob's channel must be less noisy than Eve's channel, thus only working for discrete memoryless channels. Later, in~\cite{art:csiszar}, Wyner's results were generalized for the non-degraded case.   
	Moreover, in~\cite{art:leung}, the secrecy capacity for the Gaussian wiretap channel was studied, wherein it was established that the secrecy capacity is the difference between the capacities of the legitimate and eavesdropper channels; therefore, a secure communication is possible if and only if the signal-to-noise ratio (SNR) of  the legitimate channel  is larger than that of the eavesdropper channel. 
	
	Furthermore, in~\cite{art:bloch2008} fading wiretap channels were investigated in terms of ergodic secrecy capacity (ESC) and secrecy outage probability (SOP), where it was demonstrated that fading can actually be beneficial for transmitting information in a secure manner. Besides, the multiple-input-multiple-output (MIMO) wiretap channel was investigated in~\cite{art:oggier}. Then, by virtue of the great advantages that multiple antennas can offer in achieving enhanced secrecy performances, an extensive number of works have investigated the secrecy performance of MIMO  networks~\cite{art:alves2012}. On the other hand, cooperative relaying, widely recognized for offering significant gains on reliability and coverage in wireless networks, has also been considered as a promising technique to exploit the physical characteristics of wireless channels in order to improve the security of a network against eavesdropping~\cite{art:dong2010}. For instance, in~\cite{art:dong2010}, widely-known relaying protocols, namely amplify-and-forward (AF) and decode-and-forward (DF), were evaluated for cooperative networks with multiple relays by considering the secrecy rate maximization problem and power allocation subject to a power constraint, and the transmit power minimization problem subject to a secrecy rate. Additionally, in that work, it was also proposed the so-called cooperative jamming (CJ) technique, in which the relays contribute to provide secrecy by sending a jamming signal in order to interfere the eavesdroppers, thus preventing them from extracting information from the confidential message. %In~\cite{art:goeckel}, artificial noise generation from relay nodes was investigated by considering no knowledge of eavesdropper channels or locations, therein the focus was to verify the number of eavesdroppers tolerated without affecting the secrecy performance. 
	
	%Also, a collaborative beamforming relaying technique to maximize the secrecy rate was proposed in~\cite{art:zhang2010}, under the assumption of perfect channel state information (CSI) knowledge of all links and a total transmit power constraint, whereas the imperfect CSI knowledge case for multiple eavesdroppers was studied in~\cite{art:vishwakarma}.
	
	However, the reported benefits in the aforementioned works are based on the premise that the relay is a trustworthy node. Yet, in many networks, not all nodes have the same level of security clearance, such that the information must be kept confidential even from the relays. Those scenarios have raised the interest on determining whether cooperation is beneficial or not in the case of untrusted relays~\cite{art:he2010,art:Jeong2012,art:Mo2014,art:he2009,art:Sun2012,art:Wang2014,art:Kuhestani2016,art:Kuhestani2018,art:osorio2018,art:lv2017,art:Mamaghani2017,art:Kuhestani20182,art:Xu2018}. For instance, in~\cite{art:he2010}, the achievable secrecy rate for the compress-and-forward protocol was studied for two cases: ($i$) when the first-hop relaying link channel is orthogonal to the multiple access channel from the source and relay to the destination, and ($ii$) when the second-hop relaying link channel is orthogonal to the broadcast channel from the source to the relay and destination. For the first case, it was found that the secrecy capacity is attained by restricting the confidential transmission to the direct link, then the untrusted relay is not useful. However, for the second case, it was found that the secrecy rate can be improved by relying on the untrusted relay to cooperate with information transmission rather than only considering it as an eavesdropper. { Besides, in~\cite{art:Jeong2012,art:Mo2014}, a MIMO cooperative relay network was addressed, where transmit beamforming was employed both at the source and at the untrusted relay. In~\cite{art:Jeong2012}, the noncooperative secure beamforming and cooperative secure beamforming were considered, and the conditions under which the cooperative scheme achieves a higher secrecy rate than the noncooperative scheme were characterized at the high SNR regime. In~\cite{art:Mo2014}, the jointly optimization of beamformers at the source and relay was performed in order to maximize the secrecy sum rate of a two-way communication; then, it was shown the advantages of the signal alignment against eavesdropping}. Further, in~\cite{art:he2009}, a positive secrecy rate was obtained by relying on the destination node or an external node to send a jamming signal in a two-hop compress-and forward relaying network. This technique is referred to as destination-based jamming (DBJ). { In~\cite{art:Sun2012}, the ESC was investigated for the DBJ scheme, by considering the single-relay and multiple-relay scenarios. For the latter, a secure relay selection was proposed and the results showed that, from a secrecy perspective, the system performance worsens as the number of relays increases}. %Also, in~\cite{art:huang2013}, the SOP was investigated for three schemes: direct transmission without cooperation, AF relaying, and DBJ, for which the impact of the number of antennas at the relay on the secrecy performance was evaluated. Therein, it was shown that cooperative jamming with antenna selection provides a diversity advantage for the receiver.
	{  Also, DBJ was explored in~\cite{art:Wang2014,art:Kuhestani2016,art:Kuhestani2018} along with optimal power allocation based on instantaneous channel estimations, where the impact of large scale antenna arrays at either the source or the destination was examined for the ESC in~\cite{art:Wang2014} and for the SOP in~\cite{art:Kuhestani2016}, while hardware impairments were also considered in~\cite{art:Kuhestani2018}}. 
	
	In~\cite{art:osorio2018}, the impact of the direct link on the secrecy outage probability was analyzed for a relaying network with multiple untrusted AF relays, where partial relay selection and DBJ are considered by means of a full-duplex destination. %On the other hand, in~\cite{art:lv2017}, a source-based jamming (SBJ) technique was proposed by using the direct link, for which, by assuming perfect CSI, an optimal power allocation was proposed to maximize the average secrecy capacity. Therein, a lower bound for the ESC and an asymptotic analysis for the SOP were also provided. Furthermore, in~\cite{art:wang2016}, a successive relaying scheme was proposed for a multirelay network, where the inter-relay interference is used as jamming on the untrusted relays. Therein, the SOP was investigated, and it was shown that maximal secrecy diversity can be obtained by performing optimal relay selection. Further, two non-orthogonal AF-based schemes for secure communications were proposed in~\cite{art:jiang2018}. Those schemes consider that the source transmits over the two time slots of a half-duplex transmission process; however, during the second time slot, matched filters are used at the source and untrusted relay to cancel the phases of their channels and form a beam at the destination to improve the secrecy performance, then the secrecy rate maximization-based power allocation is investigated. 
	{ Moreover, the ergodic secrecy sum rate was evaluated in~\cite{art:Mamaghani2017} by considering a two-way communication with a friendly jammer and an energy-constrained untrusted relay. In that work, wireless energy transfer technique was employed to charge the relay, and the impact of the time expended to charge the relay was evaluated. In~\cite{art:Kuhestani20182}, the authors focused on the analysis of the security-reliability trade-off based on the connection outage probability and the intercept probability under hardware impairments, for the case of direct transmission without using the relay, and the case of destination-based cooperative jamming. Moreover, in~\cite{art:Xu2018}, a technique called constellation overlapping was proposed to provide security to two-way untrusted relaying systems in a resource-efficient manner. For the symmetric case, where two terminal users adopt the same modulation type, a full constellation overlapping was obtained, and an error floor is caused at the untrusted relay. For the asymmetric case, a constellation virtualization method was employed in order to achieve the constellation overlapping effect.}
	
	Nevertheless, all the aforementioned works above are mainly based on the classical information-theoretic definition of secrecy for quasi-static fading channels, whereby the performance metrics are established from the premise that the eavesdropper's decoding error probability is equal to 1. Hence, the classical SOP shows to be limited for an appropriate design of practical secure wireless communication systems, since it establishes an extremely stringent assumption, thus presenting three important shortcomings. First, the SOP does not allow to obtain appropriate insights on the eavesdropper's ability to decode confidential messages. Second, this metric cannot characterize the amount of information that is leaked to the eavesdropper when an outage occurs. Third, it cannot be associated with the quality of service (QoS) demands of different applications and services. In this regard, recently in~\cite{art:he}, new secrecy metrics for wireless transmissions focusing on quasi-static fading channels were proposed, namely: generalized secrecy outage probability (GSOP), average fractional equivocation (AFE), and average information leakage rate (AILR). These metrics are based on the so-called partial secrecy  regime, whereby a  system  is evaluated  by means of the fractional equivocation, which  regards to  the level  at  which  the  eavesdropper  is  confused.
	
	In light of the above considerations, this paper contributes to further extend the understanding of the secrecy performance of cooperative communications with untrusted relays. To this end, we examine a three-node AF relaying network where the relay is untrusted and a DBJ protocol is used to enable secure transmissions. Differently from previous related works, we analyze the secrecy performance in terms of the metrics proposed in~\cite{art:he}, for which we also consider different power allocation policies among the source, relay, and destination. The following are the main contributions of this paper:
	\begin{itemize}
		\item A closed-form approximate expression is derived for the GSOP, which allows to associate the concept of secrecy outage with the ability of the untrusted relay to decode confidential information. Moreover, analytical expressions in a one-fold integral form are obtained for the AFE and AILR. The former is an asymptotic lower bound on the untrusted relay's decoding error probability, and the latter describes how much and how fast the information is leaked to the untrusted relay. 
		\item For the aforementioned metrics, simple closed-form asymptotic expressions are obtained joint with the system diversity order.
		\item For the three aforementioned performance metrics, equal power allocation (EPA) and optimal power allocation (OPA) schemes are compared.  %demonstrating that the power allocation strategy depends on the system's secrecy requirements, and the metric which is intended to be optimize.
		%\item The considered secrecy performance metrics and power allocation schemes are also evaluated by considering a constraint on the throughput of the confidential transmissions, for which an approximate expression is obtained. Additionally, the secrecy rate and the power allocation that maximize the throughput are also derived.
	\end{itemize}

	Throughout this paper, $f_Z \left( \cdot \right)$ and $F_Z \left( \cdot \right)$ denote the probability density function (PDF) and the cumulative distribution function (CDF) of a random variable $Z$, respectively, $\mathcal{E}\left\{ \cdot \right\}$ denotes expectation, $\Pr\left(\cdot\right)$ denotes probability, and $W\left[x\right]$ is the principal value of the Lambert-$W$ function~\cite{art:lambert}.
	\section{System Model}
	
	\label{sec:system_model}
	\begin{figure}\centering
		\psfrag{S}[Bc][Bc][1]{S}
		\psfrag{D}[Bc][Bc][1]{D}
		\psfrag{h1}[Bc][Bc][0.8]{$h_{\mathrm{SR}}$}
		\psfrag{h2}[Bc][Bc][0.8]{$h_{\mathrm{RD}}$}
		
		\psfrag{h0}[Bc][Bc][0.7]{$h_{\mathrm{SD}}$}
		\psfrag{h4}[Bc][Bc][0.7]{$h_{\mathrm{DD}}$}
		\psfrag{f1}[Bl][Bl][0.82]{first phase}
		\psfrag{fj}[Bl][Bl][0.82]{first phase: jamming}
		\psfrag{f2}[Bl][Bl][0.82]{second phase}
		\psfrag{R}[Bc][Bc][1]{R}
		\psfrag{Primeira Fase}[Bc][Bc][0.6]{First Phase}
		\psfrag{Segunda Fase}[Bc][Bc][0.6]{Second Phase}
		\includegraphics[width=0.8\linewidth]{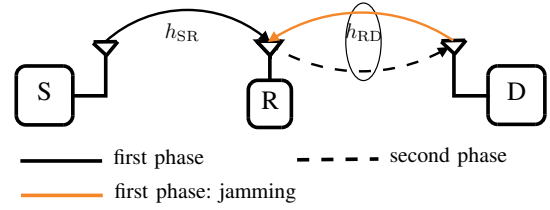}
		\caption{System Model of a three-node relaying network with an untrusted relay and DBJ.}
		\label{fig:systemodel}
	\end{figure}
	
	The system model depicted in Fig.~\ref{fig:systemodel} illustrates a cooperative relaying network consisting of a single-antenna source S that tries to communicate with a single-antenna destination D with the help of a single-antenna AF relay R. The relay node is considered to have a lower level of security clearance to access the confidential information transmitted by S, then it is an untrusted node that can potentially attempt to decode the message from the signal sending by S. In this system, the direct link is considered to be extremely attenuated, so that it is negligible, { and the widely-used wiretap codes are assumed for message transmissions~\cite{art:wyner}.  Thus, the codebook is assumed to be defined by two rate parameters, the codeword transmission rate $R_T$$=$$\tfrac{H\left(X^n\right)}{n}$ and the confidential information rate or target secrecy rate $R_S$$=$$\tfrac{H\left(M\right)}{n}$, where $R_S \leq R_T$. Hence, the wiretap code is constructed by generating $2^{nR_T}$ codewords $x^n\left(u,v\right)$, where $u$$=$$\{1,2,...,2^{nR_S}\}$ and $v$$=$$\{1,2,...,2^{n\left(R_T-R_S\right)}\}$. Then, for each message of index $u$, it is randomly selected an index $v$ with uniform distribution, such that the codeword $x^n\left(u,v\right)$ is transmitted.} The communication process is performed in two phases, as described below:
	
	\begin{itemize}
		\item \textit{First phase:} in this phase, S sends an information signal $s_I(t)$ to R; meanwhile, D sends an artificial jamming signal $s_J(t)$ to hinder the relay from eavesdropping the information sent by S.
		\item \textit{Second phase:} in this phase, the relay amplifies the signal received from S, which was interfered by the jamming signal sent by D, and forwards it to the destination. Since D knows the jamming signal transmitted in the previous phase, this can be subtracted f rom the received signal.
	\end{itemize}
	
	All links in this network are considered to undergo quasi-static Rayleigh fading, as well as additive white Gaussian noise (AWGN) with mean power $N_0$. Therefore, the channel coefficients for the links S$\rightarrow$R,  R$\rightarrow$D and D$\rightarrow$R, denoted by $h_{\mathrm{SR}}$, $h_{\mathrm{R}\mathrm{D}}$, and $h_{\mathrm{DR}}$, respectively, are independent complex circularly-symmetric Gaussian random variables with variance $\Omega_{\mathrm{AB}}\mathop{=} E\{|h_{\mathrm{AB}}|^2\}$, that is $\mathcal{CN}\left(0,\Omega_{\mathrm{AB}}\right)$, with $\mathrm{A}\mathop{\in}\{\mathrm{S}, \mathrm{R}, \mathrm{D}\}$ and $\mathrm{B}\mathop{\in}\{\mathrm{R},\mathrm{D}\}$. Accordingly, the channel gains $g_{\mathrm{AB}}=|h_{\mathrm{AB}}|^2$ are exponentially distributed with mean value $\Omega_{\mathrm{AB}}$. %These channel coefficients are supposed to remain constant during a data-block transmission, but vary independently through consecutive blocks. 
	Thus, the instantaneous received signal-to-noise ratios (SNRs) at the first-hop relaying link, second-hop relaying link, and jamming link are, respectively, given by $\gamma_{\mathrm{SR}}$$=$$ g_{\mathrm{SR}}P_\mathrm{S}/N_0$, $\gamma_{\mathrm{RD}}$$=$$ g_{\mathrm{R}\mathrm{D}}P_\mathrm{R}/N_0$, and $\gamma_{\mathrm{DR}}$$=$$ g_{\mathrm{DR}}P_\mathrm{D}/N_0$, where $P_\mathrm{S}$, $P_\mathrm{R}$, and $P_\mathrm{D}$ are the transmit powers at source, relay and destination. In addition, the total transmit system power is assumed to be limited to a value of $P$, for the whole transmission process. Accordingly, by denoting the total transmit system SNR as $\gamma_P=P/N_0$, the transmit SNRs at S, D, and R can be respectively written as $\gamma_{{\mathrm{S}}} \mathop{=} P_{\mathrm{S}}/N_0 \mathop{=}\eta_1 \gamma_P$, $\gamma_{{\mathrm{R}}}\mathop{=}{P_\mathrm{R}}/N_0\mathop{=}\eta_2 \gamma_P$, and $\gamma_{{\mathrm{D}}} \mathop{=}{P_\mathrm{D}}/N_0\mathop{=}\eta_3 \gamma_P$, where $\eta_1$, $\eta_2$, and $\eta_3$ are the power allocation factors, with $\eta_1+\eta_2+\eta_3=1$.
	
	Under the above assumptions, the received signal at R during the first phase is given by
	\begin{align}
	y_{\mathrm{R}}\left(t\right)=&\sqrt{P_\mathrm{S}} h_{\mathrm{SR}}s_I\left(t\right) {+} \sqrt{P_\mathrm{D}} h_{\mathrm{DR}} s_J\left(t\right) {+} n_{\mathrm{R}}\!\left(t\right),\label{eq:rxsignalR}
	\end{align}
	where the mean power of the signals $s_I(t)$ and $s_J(t)$ are normalized to unity, that is  $E\{|s_I\left(t\right)|^2\}=E\{|s_J\left(t\right)|^2\}=1$, and $n_{\mathrm{R}}\!\left(t\right)$ is the AWGN component at R. 
	
	On the other hand, the received signal at D during the second phase is given by
	\begin{align}
	y_{\mathrm{D}}\left(t\right)=&\sqrt{P_\mathrm{R} } h_{\mathrm{R}\mathrm{D}} \mathcal{G} y_{\mathrm{R}}\left(t\right) + n_{\mathrm{D}}\!\left(t\right),\label{eq:rxsignalD2}
	\end{align}
	where $n_{\mathrm{D}}\!\left(t\right)$ is the AWGN component at D, and $\mathcal{G}$ is the amplification factor related to the AF protocol, given as
	\begin{align}\label{eq:ampfactor}
	\mathcal{G}=\dfrac{1}{\sqrt{P_\mathrm{S}g_{\mathrm{SR}}+P_\mathrm{D}g_{\mathrm DR}+N_0}}.
	\end{align}
	Hence, by substituting~\eqref{eq:rxsignalR} into~\eqref{eq:rxsignalD2} and considering that the jamming signal can be effectively removed at D, the received signal at D during the second phase can be rewritten as
	\begin{align}
	y_{\mathrm{D}}\left(t\right)\!=\!\sqrt{P_\mathrm{R} } h_{\mathrm{R}\mathrm{D}} \mathcal{G} \sqrt{P_\mathrm{S}} h_{\mathrm{SR}}s_I\left(t\right) \!+\! \sqrt{P_\mathrm{R} } h_{\mathrm{R}\mathrm{D}} \mathcal{G} n_{\mathrm{R}}\!\left(t\right) \!+\! n_{\mathrm{D}}\!\left(t\right).\label{eq:rxsignalD2r}
	\end{align}
	Thus, after some mathematical manipulation, the instantaneous signal-to-interference-plus-noise ratio (SINR) received at R and D, during the first and second phase, can respectively be expressed as
	\begin{align}
	\Gamma_{\mathrm{R}}=& \dfrac{{P_\mathrm{S}} g_{\mathrm{SR}}}{{P_\mathrm{D}} g_{\mathrm{R}\mathrm{D}} +N_0}=\dfrac{\gamma_{\mathrm{SR}}}{\gamma_{\mathrm{RD}}+1},\label{eq:sinrR}\\
	\Gamma_{\mathrm{D}} = & \dfrac{{P_\mathrm{S}} g_{\mathrm{SR} }{P_\mathrm{R}} g_{\mathrm{R}\mathrm{D}}}{{P_\mathrm{S}} g_{\mathrm{SR} }N_0+{P_\mathrm{R}} g_{\mathrm{R}\mathrm{D}}N_0+{P_\mathrm{D}} g_{\mathrm{R}\mathrm{D}}N_0+N_0^2}\nonumber\\
	= &\dfrac{\gamma_{\mathrm{SR}}\gamma_{\mathrm{RD}}}{\gamma_{\mathrm{SR}}+\gamma_{\mathrm{RD}}+\gamma_{\mathrm{DR}}+1},\label{eq:snrD2}
	\end{align}
	where we have assumed reciprocity between the R$\rightarrow$D and 
	D$\rightarrow$R links, so that in the following we consider $\gamma_{\mathrm{RD}}=\gamma_{\mathrm{DR}}$.
	%\\
	%& = \dfrac{X Y}{X+Y+J+1}.
	
	\section{Secrecy Performance Analysis}
	\label{sec:outage_analysis}
	
	The partial secrecy can be quantified by the fractional equivocation, which is an asymptotic lower bound of the decoding error probability at the eavesdropper; therefore, it is related to the capability of the eavesdropper on decoding the confidential message, being defined as~\cite{art:leung}
	\begin{align}\label{eq:Delta}
	\Delta=\dfrac{H\left(M|Z^n\right)}{H\left(M\right)}
	\end{align}
	where $M$ is the confidential information, $Z^n$ is the received information at the eavesdropper, and $H\left(M\right)$ and $H\left(M|Z^n\right)$  are the entropy of the source's information and the residual uncertainty for the message at the eavesdropper, respectively. From this, the fractional equivocation for a given fading realization of the wireless channel is given by~\cite{art:he}
	\begin{align}\label{eq:Deltau_1b}
	\Delta=\left\lbrace 
	\begin{array}{ll}
	1, & \mathrm{if}\ C_E \leq C_L-R_S \\ 
	\left(C_L-C_E\right)/R_S, & \mathrm{if}\ C_L-R_S< C_E < C_L\\
	0, & \mathrm{if}\ C_L\leq C_E,
	\end{array}  \right.
	\end{align}
	where, for the proposed system, $C_{L}=\tfrac{1}{2}\log_2 \left(1+\Gamma_\mathrm{D}\right)$ is the capacity of the legitimate channel; $C_{E}=\tfrac{1}{2}\log_2 \left(1+\Gamma_\mathrm{R}\right)$ is the capacity of the eavesdropper channel (at the S$\rightarrow$R link); and $R_S$ is the secrecy rate. Thus, by defining the random variable { $\Phi \buildrel \Delta \over =\tfrac{1+\Gamma_\mathrm{D}}{1+\Gamma_\mathrm{R}}$}, the fractional equivocation in~\eqref{eq:Deltau_1b} can be rewritten~as
	\begin{align}\label{eq:Deltau_1br}
	\Delta=\left\lbrace 
	\begin{array}{ll}
	1, & \mathrm{if}\ \Phi \geq 2^{2 R_S} \\ 
	\tfrac{1}{2 R_S} \log_2 \Phi, & \mathrm{if}\ \Phi < 2^{2 R_S} < \Phi 2^{2 R_S}\\
	0, & \mathrm{if}\ \Phi \leq 1.
	\end{array}  \right.
	\end{align}
	
	A characterization of $\Delta$ measures the long-term performance of a system with time-varying channel realizations, from which the GSOP, AFE, and AILR can be investigated~\cite{art:leung}. In the following, these three secrecy metrics are analyzed for the proposed system. 
	
	\subsection{Generalized Secrecy Outage Probability -- GSOP}
	\label{subsec:gsop}
	
	The GSOP characterizes the probability that the fractional equivocation is lower than a certain value $\theta$ or, in other words, the information leakage ratio, $1-\Delta$, is larger than a certain value, $1-\theta$, and it can be expressed as~\cite{art:he}
	\begin{align}\label{eq:gsop}
	\mathrm{GSOP}= \Pr\left(\Delta<\theta\right),
	\end{align}
	where $0<\theta\leq1$ denotes the minimum acceptable value of the fractional equivocation, and it is related to the ability of R on extracting information from the confidential message sent by S. An approximate expression for the GSOP of the proposed system, which is highly accurate for the medium-to-high SNR regime, can be obtained as stated in the following proposition.
	
	\begin{proposition}\label{prop:gsop}
		By defining $\tau_1 \buildrel \Delta \over = 2^{2 R_S \theta}$, an approximation for the generalized secrecy outage probability of a three-node AF relaying network with an untrusted relay is given by
		\begin{align}\label{eq:gsopfinal}
		\nonumber \mathrm{GSOP}\approx&\sqrt{\pi } \eta_1\tau_1  \Omega_{\mathrm{SR}}\psi_3 e^{\psi_6} \mathrm{erfc}\left[\psi_3 (\eta_1\tau_1  \Omega_{\mathrm{SR}}+\Omega_{\mathrm{RD}} \sqrt{\psi_2})\right]\\ &{-}e^{-\psi_5} {+}\left(e^{-\psi_5}{-}e^{-\psi_4}\right) e^{-\frac{\eta_1\Omega_{\mathrm{SR}}\psi_5}{\Omega_{\mathrm{RD}} (\eta_2+\eta_3)}}-e^{\psi_7}+e^{-\psi_8}{+}1,
		\end{align}
		where,
		\begin{align*}
		\nonumber	\psi_1=&\sqrt{\frac{(\eta_2\!+\!\eta_3)^2\! \left(\!\eta_2^2 (\tau_1\! -\!1)^2\!+\!2 \eta_2 \eta_3 \!\left(2 \tau_1 ^2\!-\!\tau_1\! -\!1\right)\!+\!\eta_3^2 (1\!-\!2 \tau_1 )^2\right)}{\gamma_P^2 \eta_1^2 \eta_2^2 \eta_3^2}},\\ 
		\nonumber \psi_2=&\eta_2^2 \left(2 \tau_1 ^2-2 \tau_1 +1\right)+\eta_3^2 \left(5 \tau_1 ^2-4 \tau_1 +1\right)\\
		\nonumber &+2 \eta_2 \eta_3 \left(3 \tau_1 ^2-\tau_1 +\gamma_P \eta_1 \tau_1  \psi_1-1\right),\\ 
		\nonumber	\psi_3=&\frac{1}{2 \Omega_{\mathrm{RD}} \sqrt{\gamma_P \eta_1 \eta_2 \eta_3 \tau_1  \Omega_{\mathrm{SR}}}},\\
		\nonumber	\psi_4=&\frac{(\tau_1 -1) (\eta_2+\eta_3)}{\gamma_P \eta_1 \eta_2 \Omega_{\mathrm{SR}}},\\ 
		\nonumber \psi_5=&\frac{(\eta_2\!+\!\eta_3) \!\sqrt{\!-2 \tau_1  \left(\eta_2^2\!+\!\eta_2 \eta_3\!+\!2 \eta_3^2\right)\!+\!\tau_1 ^2 (\eta_2\!+\!2 \eta_3)^2\!+\!(\eta_2\!-\!\eta_3)^2}}{2 \gamma_P \eta_1 \eta_2 \eta_3 \Omega_{\mathrm{SR}}}\\
		\nonumber &+\frac{(\eta_2+\eta_3) \left(\eta_2 (\tau_1 -1)+\eta_3 (2 \tau_1 -1)\right)}{2 \gamma_P \eta_1 \eta_2 \eta_3 \Omega_{\mathrm{SR}}},\\
		\nonumber \psi_6=&\frac{2 \eta_2 \Omega_{\mathrm{RD}} (\eta_1 \tau_1  \Omega_{\mathrm{SR}}+\eta_3 (\tau -1) \Omega_{\mathrm{RD}})+\eta_2^2 \Omega_{\mathrm{RD}}^2}{4 \gamma_P \eta_1 \eta_2 \eta_3 \tau_1  \Omega_{\mathrm{RD}}^2 \Omega_{\mathrm{SR}}} \\
		\nonumber &+\frac{(\eta_1 \tau_1  \Omega_{\mathrm{SR}}+\eta_3 (1-\tau_1 ) \Omega_{\mathrm{RD}})^2}{4 \gamma_P \eta_1 \eta_2 \eta_3 \tau_1  \Omega_{\mathrm{RD}}^2 \Omega_{\mathrm{SR}}}, \\
		\nonumber	\psi_7=&\frac{\eta_1 \Omega_{\mathrm{SR}} (\eta_2-\eta_3 \tau_1 +\eta_3)-\eta_1 (\gamma_P \eta_2 \eta_3 \Omega_{\mathrm{RD}} \psi_1+\Omega_{\mathrm{SR}} \sqrt{\psi_2})}{2 \gamma_P \eta_1 \eta_2 \eta_3 \Omega_{\mathrm{RD}} \Omega_{\mathrm{SR}}}
		\nonumber \\&+\frac{\Omega_{\mathrm{RD}} (-\eta_2-\eta_3) (\eta_2 (\tau_1 -1)+\eta_3 (2 \tau_1 -1))}{2 \gamma_P \eta_1 \eta_2 \eta_3 \Omega_{\mathrm{RD}} \Omega_{\mathrm{SR}}},\\
		\nonumber \psi_8 =&\frac{\eta_2^2 (\tau_1-1)+\eta_3^2 (2 \tau_1-1)+\eta_2 \eta_3 (3 \tau_1+\gamma_P \eta_1 \psi_1-2)}{2 \gamma_P \eta_1 \eta_2 \eta_3 \Omega_{\mathrm{SR}}}.
		\end{align*}
		
	\end{proposition}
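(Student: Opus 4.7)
\textit{Proof Plan.} The starting point is to translate the probabilistic event $\{\Delta < \theta\}$ into an event on the random variable $\Phi = (1+\Gamma_{\mathrm{D}})/(1+\Gamma_{\mathrm{R}})$. Inspecting the piecewise definition~\eqref{eq:Deltau_1br}, one sees that for any $\theta \in (0,1]$ the inequality $\Delta<\theta$ holds if and only if $\Phi<2^{2 R_S \theta}=\tau_1$. Hence $\mathrm{GSOP} = \Pr(\Phi < \tau_1)$, which after substituting $\Gamma_{\mathrm{R}}$ from~\eqref{eq:sinrR} and $\Gamma_{\mathrm{D}}$ from~\eqref{eq:snrD2} and clearing denominators becomes a polynomial inequality in $\gamma_{\mathrm{SR}}$, $\gamma_{\mathrm{RD}}$ and $\gamma_{\mathrm{DR}}$. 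Using the reciprocity convention fixed in the system model (so that only two independent exponential channel gains $g_{\mathrm{SR}},g_{\mathrm{RD}}$ appear, scaled by the three power factors $\eta_1,\eta_2,\eta_3$), this inequality reduces to a condition in two non-negative variables.

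The next step is the controlled approximation that underlies the ``medium-to-high SNR'' qualifier: in both SINR expressions the additive unit in each denominator is dropped relative to the SNR terms. Under this simplification, $\Phi<\tau_1$ becomes a quadratic inequality in $\gamma_{\mathrm{SR}}$ for each fixed $\gamma_{\mathrm{RD}}$. I would solve this quadratic explicitly, identifying the two (possibly complex) roots and the range of $\gamma_{\mathrm{RD}}$ for which they are real and non-negative; the critical values of $\gamma_{\mathrm{RD}}$ at which the discriminant vanishes or a root crosses zero partition the positive $\gamma_{\mathrm{RD}}$-axis into sub-intervals. On each sub-interval the integration region in $\gamma_{\mathrm{SR}}$ is either a half-line $[0,x_{+}(\gamma_{\mathrm{RD}})]$ or an interval $[x_{-}(\gamma_{\mathrm{RD}}),x_{+}(\gamma_{\mathrm{RD}})]$, and this case split is what will ultimately produce the several additive exponential and $\mathrm{erfc}$ pieces appearing in~\eqref{eq:gsopfinal}.

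With the region parametrised, the GSOP is written as the double integral of the product of the two exponential densities $f_{\gamma_{\mathrm{SR}}}(x)\,f_{\gamma_{\mathrm{RD}}}(y)$ over the union of those sub-regions. The inner integral in $x$ is elementary and yields terms of the form $\exp\!\left(-\alpha y - \beta - \gamma/y\right)$ and $\exp\!\left(-\alpha y - \beta\right)$, multiplied by polynomial factors in $y$. The constant and the pure-exponential pieces integrate to the plain $e^{\psi_i}$ summands in~\eqref{eq:gsopfinal}; the term with a $\gamma/y$ contribution is the source of the $\mathrm{erfc}$: after the substitution $u=\sqrt{y}$ the exponent becomes a quadratic in $u$, and completing the square reduces the remaining integral to $\int_{a}^{\infty} e^{-(u+c)^2}\,du = \tfrac{\sqrt{\pi}}{2}\,\mathrm{erfc}(a+c)$, with $a,c$ determined by the power-allocation factors and the channel means—this matches the constants $\psi_3,\psi_6$ in the claimed expression.

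The main obstacle is not the integration itself but the bookkeeping: correctly identifying the boundaries in $\gamma_{\mathrm{RD}}$ (which requires analysing the sign of the discriminant of the quadratic in $\gamma_{\mathrm{SR}}$ as a function of $\eta_1,\eta_2,\eta_3,\tau_1$), choosing the right branches of $x_{\pm}$, and then packaging the algebra so that the many exponents collapse into the compact constants $\psi_1,\ldots,\psi_8$. I would maintain symbolic names for the intermediate discriminant and root expressions throughout, and only at the end identify the resulting combinations with the $\psi_i$ listed in the proposition; this keeps the derivation tractable and makes it transparent that the $\psi_1$ appearing inside $\psi_2$, $\psi_7$ and $\psi_8$ is exactly the square-root of that discriminant.
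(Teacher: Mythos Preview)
Your reduction $\mathrm{GSOP}=\Pr(\Phi<\tau_1)$ and the idea of conditioning on one channel gain and integrating are correct, but the controlled approximation you propose is \emph{not} the one that produces~\eqref{eq:gsopfinal}. You suggest simply dropping the additive ``$+1$'' terms in the SINR denominators. The paper instead applies the harmonic--mean upper bound $\tfrac{AB}{A+B+1}<\min\{A,B\}$ to $\Gamma_{\mathrm{D}}$, writing
\[
\Gamma_{\mathrm{D}} \;<\; \frac{\gamma_{\mathrm{R}}}{\gamma_{\mathrm{R}}+\gamma_{\mathrm{D}}}\,\min\{\gamma_{\mathrm{S}}g_{\mathrm{SR}},\,(\gamma_{\mathrm{R}}+\gamma_{\mathrm{D}})g_{\mathrm{RD}}\},
\]
while keeping $\Gamma_{\mathrm{R}}=\gamma_{\mathrm{S}}g_{\mathrm{SR}}/(\gamma_{\mathrm{D}}g_{\mathrm{RD}}+1)$ exact. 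This is a strictly different (and more aggressive) simplification than yours: it replaces the rational $\Gamma_{\mathrm{D}}$ by a piecewise--linear object, and the $\min\{\cdot,\cdot\}$ is precisely what generates the case split. The paper conditions on which argument attains the minimum, giving two terms $J_1$ (when $g_{\mathrm{RD}}>\gamma_{\mathrm{S}}g_{\mathrm{SR}}/(\gamma_{\mathrm{R}}+\gamma_{\mathrm{D}})$) and $J_2$ (the complement), each of which is then split again into $T_1,T_2$ and $T_3,T_4$ by analysing the sign of the resulting quadratic in $g_{\mathrm{SR}}$. Your ``drop the $+1$'' route would leave you with the true harmonic mean, whose roots and integration regions are algebraically different; you would obtain \emph{an} approximation, but not the constants $\psi_1,\ldots,\psi_8$ as stated.

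There is also a second approximation you have not anticipated. In the $T_2$ integral the upper limit in $g_{\mathrm{RD}}$ is a rational function of $g_{\mathrm{SR}}$ that does not integrate in closed form; the paper replaces this curved boundary by a horizontal line (a rectangular region $\mathrm{Reg}_2$ approximating the true $\mathrm{Reg}_1$), arguing that the discrepancy is negligible at medium-to-high SNR. This ad hoc geometric step is what yields the product term $(e^{-\psi_5}-e^{-\psi_4})e^{-\eta_1\Omega_{\mathrm{SR}}\psi_5/[\Omega_{\mathrm{RD}}(\eta_2+\eta_3)]}$ in~\eqref{eq:gsopfinal}. The $\mathrm{erfc}$ piece, on the other hand, arises exactly as you describe---from the $T_4$ integral, where the exponent contains a square root of a linear function of $g_{\mathrm{SR}}$ and Gradshteyn--Ryzhik~3.322 (equivalently, completing the square after a $\sqrt{\,\cdot\,}$ substitution) applies. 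Finally, note that the paper integrates in the opposite order from your plan: it evaluates $F_{g_{\mathrm{RD}}}(\cdot)$ first and then integrates against $f_{g_{\mathrm{SR}}}$, which is what makes the bookkeeping for $\psi_4,\psi_5$ (the $g_{\mathrm{SR}}$--limits) line up with the stated formulas.
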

	
	\begin{proof}
		The proof is provided in appendix~\ref{app:A}.
	\end{proof}
	
	\subsection{Average Fractional Equivocation -- AFE}
	
	{  By taking the average of the fractional equivocation in~\eqref{eq:Deltau_1br} we can derive the (longterm) average for the factional equivocation. Therefore, as the eavesdropper's decoding error probability for a given fading realization is asymptotically lower bounded by the fractional equivocation, the average factional equivocation, $\bar\Delta$ , actually provides an asymptotic lower bound on the overall decoding error probability at Eve, thus being an error-probability-based secrecy metric, given as in~\cite{art:he}}
	\begin{align}\label{eq:afefor}	
	\bar \Delta =& \mathbb{E}\{\Delta\}	.
	\end{align}
	For the proposed system, an approximation for the average fractional equivocation at high SNR is obtained as in Proposition~\ref{prop:afe}.
	
	\begin{proposition}\label{prop:afe}
		An approximate expression for the average fractional equivocation of a three-node AF relaying network with an untrusted relay is given by
		\begin{align}\label{eq:afep}
		\nonumber	\bar \Delta \approx& 1-\left(1-\dfrac{\ln(2^{2 R_S})}{\ln(2){2 R_S}}\right)F_{\Phi}\left(2^{2 R_S}\right)\\
		&-\dfrac{1}{\ln(2){2 R_S}}\left[  \ln(1) F_{\Phi}\left(1\right) +\int_{1}^{2^{2 R_S}}\dfrac{1}{\phi}F_{\Phi}\left(\phi\right)d \phi\right] ,	
		\end{align}
		where  $F_{\Phi}\left(\cdot\right)$ is the CDF of the random variable $\Phi$ defined as in~\eqref{eq:Deltau_1br}, given by~\eqref{eq:cdfpphi1} in Appendix~\ref{app:A}. \eqref{eq:gsopfinal}, which was obtained as described in Appendix~\ref{app:A}.
	\end{proposition}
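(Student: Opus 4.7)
The plan is to start from the definition $\bar\Delta=\mathbb{E}\{\Delta\}$ and split the expectation according to the three regimes of $\Delta$ in~\eqref{eq:Deltau_1br}. Writing the expectation as a Stieltjes integral against the distribution of $\Phi$, and using that $\Delta=0$ on $\{\Phi\le 1\}$, $\Delta=\tfrac{1}{2R_S}\log_2\Phi$ on $\{1<\Phi<2^{2R_S}\}$, and $\Delta=1$ on $\{\Phi\ge 2^{2R_S}\}$, I would first obtain
\begin{align*}
\bar\Delta=\frac{1}{2R_S}\!\!\int_{1}^{2^{2R_S}}\!\!\log_2(\phi)\,f_\Phi(\phi)\,d\phi+1-F_\Phi\!\left(2^{2R_S}\right).
\end{align*}

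The second step is to transform the middle integral from the PDF form to a CDF form, since a closed-form expression for $F_\Phi$ (rather than $f_\Phi$) is what Appendix~\ref{app:A} provides en route to Proposition~\ref{prop:gsop}. Applying integration by parts with $u=\log_2\phi$ and $dv=f_\Phi(\phi)d\phi$ (so that $du=\tfrac{d\phi}{\phi\ln 2}$ and $v=F_\Phi(\phi)$) yields the boundary contribution $\log_2(2^{2R_S})F_\Phi(2^{2R_S})-\log_2(1)F_\Phi(1)$ and a remaining $-\tfrac{1}{\ln 2}\int_{1}^{2^{2R_S}}F_\Phi(\phi)/\phi\,d\phi$. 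Substituting back and collecting terms gives~\eqref{eq:afep} directly; I would leave the $\ln(2^{2R_S})$ and $\ln(1)F_\Phi(1)$ terms explicit (even though the former reduces to $2R_S\ln 2$ and the latter to $0$) to make the integration-by-parts structure transparent to the reader.

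The final step is simply to identify $F_\Phi$ with the expression~\eqref{eq:cdfpphi1} derived in Appendix~\ref{app:A}. The word \emph{approximate} in the statement is inherited entirely from that appendix: the reduction from $\bar\Delta$ to the one-fold integral is exact, but $F_\Phi$ itself is the high-SNR approximation used for the GSOP, which is why~\eqref{eq:afep} cannot be claimed as an equality.

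I expect no conceptual obstacle in the derivation itself; the only subtlety is bookkeeping at the boundary $\phi=1$, where $\log_2\phi=0$ and $F_\Phi(1)$ is in general nonzero. Since the boundary term vanishes via the logarithm rather than via the CDF, the integration-by-parts step is well-defined without needing to evaluate $F_\Phi(1)$ explicitly, and the remaining integral is a proper integral on $[1,2^{2R_S}]$ with bounded integrand (as $F_\Phi(\phi)/\phi$ is continuous on this compact interval), so it is numerically tractable whenever $F_\Phi$ is available in closed form.
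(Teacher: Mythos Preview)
Your proposal is correct and follows essentially the same route as the paper's proof in Appendix~\ref{app:D}: split $\mathbb{E}\{\Delta\}$ according to the three regimes of~\eqref{eq:Deltau_1br}, then integrate by parts with $u=\log_2\phi$ and $dv=f_\Phi(\phi)\,d\phi$ to convert the PDF integral into the CDF form~\eqref{eq:afep}. Your observation that the ``approximate'' label is inherited solely from the high-SNR bound on $F_\Phi$ in Appendix~\ref{app:A} is also exactly the paper's intent.
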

	\begin{proof}
		The proof is provided in Appendix~\ref{app:D}.
	\end{proof}

	\subsection{Average Information Leakage Rate -- AILR}
	
	{  For the cases where classical information-theoretic secrecy is not achievable, some information will be leaked to the eavesdropper. Therefore, different secure transmission schemes that lead to the same secrecy outage probability may actually result in very different amounts of information leakage. Then, this metric provides a notion of how fast the information is leaked to the untrusted relay. Thus, for a fixed-rate transmission, it can be defined as in~\cite{art:he}}
	\begin{align}\label{eq:rl}
	R_L=\left(1-\bar \Delta\right) R_S.
	\end{align}
	For the proposed system, an approximation for the average information leakage rate at high SNR is obtained by following Proposition~\ref{prop:afe} and~\eqref{eq:rl}, as follows.
	
	\begin{corollary}\label{prop:ailr}
		An approximate expression for the average information leakage rate of a three-node AF relaying network with an untrusted relay is given by
		\begin{align}\label{eq:ailrfin}
		\nonumber	R_L \approx& R_S\left(1-\dfrac{\ln(2^{2 R_S})}{\ln(2)2^{2 R_S}}\right)F_{\Phi}\left(2^{2 R_S}\right)\\
		&+\dfrac{R_S}{ \ln(2)2^{2 R_S}}\left[  \ln(1) F_{\Phi}\left(1\right) +\int_{1}^{2^{2 R_S}}\dfrac{1}{\phi}F_{\Phi}\left(\phi\right)d \phi\right],	
		\end{align}
		where  $F_{\Phi}\left(\cdot\right)$ is given by~\eqref{eq:cdfpphi1} in Appendix~\ref{app:A}.
	\end{corollary}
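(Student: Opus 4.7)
The plan is to obtain \eqref{eq:ailrfin} as an immediate algebraic consequence of the two ingredients already in place: the definition $R_L = (1-\bar\Delta) R_S$ given in \eqref{eq:rl}, and the medium-to-high SNR approximation for $\bar\Delta$ furnished by Proposition~\ref{prop:afe}. First I would compute $1 - \bar\Delta$ by subtracting the right-hand side of \eqref{eq:afep} from unity; the leading constant $1$ cancels and what survives is precisely the two terms in \eqref{eq:afep} with their signs flipped, i.e., $\bigl(1-\tfrac{\ln(2^{2R_S})}{\ln(2)\,2R_S}\bigr) F_\Phi(2^{2R_S})$ together with $\tfrac{1}{\ln(2)\,2R_S}\bigl[\ln(1) F_\Phi(1) + \int_1^{2^{2R_S}} \phi^{-1} F_\Phi(\phi)\,d\phi\bigr]$.

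Second, I would multiply the resulting expression for $1-\bar\Delta$ by $R_S$, distributing that factor across both surviving terms and collecting coefficients to arrive at the claimed form of $R_L$ in \eqref{eq:ailrfin}. The CDF $F_\Phi(\cdot)$ entering the integrand is the one constructed in Appendix~\ref{app:A} during the derivation of the GSOP, so no further distributional analysis is required, and the evaluations $F_\Phi(1)$ and $F_\Phi(2^{2R_S})$ can be read off from that same expression.

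Because the statement is a direct corollary of Proposition~\ref{prop:afe}, no genuine technical obstacle arises. The validity of the approximation in the medium-to-high SNR regime is inherited from the approximation for $\bar\Delta$ without modification, since multiplying by the constant $R_S$ preserves the order of the approximation error. The only care required is in the bookkeeping: keeping the signs consistent under the map $\bar\Delta \mapsto 1-\bar\Delta$, and noting that the integration range $[1,2^{2R_S}]$ corresponds precisely to the intermediate branch of \eqref{eq:Deltau_1br}, which is where the only nonzero contribution to $R_L$ originates (the branch $\Phi\ge 2^{2R_S}$ yields $\Delta=1$ and hence zero leakage, while the branch $\Phi\le 1$ contributes only through the boundary value $F_\Phi(1)$).
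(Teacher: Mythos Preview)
Your proposal is correct and is precisely the paper's own approach: the corollary is stated immediately after the sentence ``obtained by following Proposition~\ref{prop:afe} and~\eqref{eq:rl}'', with no separate proof, so substituting \eqref{eq:afep} into $R_L=(1-\bar\Delta)R_S$ and distributing the factor $R_S$ is all that is required. Note only that the coefficients you wrote, $\tfrac{\ln(2^{2R_S})}{\ln(2)\,2R_S}$ and $\tfrac{1}{\ln(2)\,2R_S}$, agree with Proposition~\ref{prop:afe} but differ from the $2^{2R_S}$ appearing in the denominators of \eqref{eq:ailrfin} as printed; this is a typographical inconsistency in the statement rather than an issue with your derivation.
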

	%\begin{proof}
	%The expression can be obtained straightforwardly from~\eqref{eq:afep}.
	%\end{proof}
	{ 
	\section{ Asymptotic Analysis}
	\label{sec:asymptotic}
	In this section, simpler analytical expressions for the three metrics analyzed above are obtained by considering the system performance at high SNR, which are useful to grasp a better insight into the attained diversity order and the impact of key parameters on the secrecy performance of the investigated system. These asymptotic expressions for GSOP, AFE, and AILR are introduced in the following proposition.
	\begin{proposition}\label{prop:asymptote}
	Asymptotic expressions for the GSOP~\eqref{eq:gsopfinal}, AFE~\eqref{eq:afep}, and AILR~\eqref{eq:ailrfin} of a three-node AF relaying network with an untrusted relay are given by
	\begin{align}
	 \nonumber \mathrm{GSOP}^{\infty}=&\sqrt{\frac{\pi\eta_1 \tau_1  \Omega_{\mathrm{SR}}}{4 \gamma_P \eta_2 \eta_3 \Omega_{\mathrm{RD}}^2}}\\
	\nonumber &+\frac{2 (1\!-\!\eta_1) \eta_3 (\tau_1 \!-\!1) \Omega_{\mathrm{RD}}\!+\!\eta_1 \Omega_{\mathrm{SR}} (\eta_1\!+\!\eta_3 \tau_1\! -\!1)}{2 \gamma_P \eta_1 \eta_2 \eta_3 \Omega_{\mathrm{RD}} \Omega_{\mathrm{SR}}}\\
	 &\propto \left(\dfrac{1}{\gamma_P}\right)^{\tfrac{1}{2}},\label{eq:gsopasymp}\\
\nonumber	 \bar\Delta^{\infty} \!\!= & \frac{1}{2 \gamma_P \eta_1 \eta_2 \eta_3 \Omega_{\mathrm{RD}} \Omega_{\mathrm{SR}} \log (2^{2 R_S} )}\\
\nonumber	 & \times \left[ -2 \sqrt{\pi } \sqrt{\gamma_P} \eta_1^{3/2} \sqrt{\eta_2} \sqrt{\eta_3} \left(\sqrt{2^{2 R_S} }-1\right) \Omega_{\mathrm{SR}}^{3/2} \right.+\!\ln (2^{2 R_S} )\\
\nonumber	 &\times\!\left[2 \eta_3 \Omega_{\mathrm{RD}} (\gamma_P \eta_1 \eta_2 \Omega_{\mathrm{SR}}\!-\!\eta_1\!+\!1)\!-\!(\eta_1\!-\!1) \eta_1 \Omega_{\mathrm{SR}}\right]\\
	 & \left. +\eta_3 (2^{2 R_S} -1) (2 (\eta_1-1) \Omega_{\mathrm{RD}}-\eta_1 \Omega_{\mathrm{SR}})\right],\label{eq:afeasymp}\\
R_L^{\infty}=&\left(1-\bar\Delta^{\infty}\right) R_S.
	\end{align}
	\end{proposition}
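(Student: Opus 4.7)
The plan is to derive each of the three asymptotic expressions by applying large-$\gamma_P$ expansions to the exact (or integral-form) results of Propositions~\ref{prop:gsop} and~\ref{prop:afe} and Corollary~\ref{prop:ailr}. The common theme is that all the auxiliary quantities $\psi_1,\dots,\psi_8$ appearing in~\eqref{eq:gsopfinal} either vanish as $1/\gamma_P$ or $1/\sqrt{\gamma_P}$, so that standard Taylor expansions of $e^{-x}$ and $\mathrm{erfc}(x)$ around $x=0$, kept to first order in $1/\gamma_P$, will collapse the lengthy closed-form expression into a few leading terms. I would begin by classifying each $\psi_i$ according to its order: $\psi_1 \sim 1/\gamma_P$, $\psi_3\sim 1/\sqrt{\gamma_P}$, $\psi_4,\psi_5,\psi_7,\psi_8 \sim 1/\gamma_P$, while $\psi_2$ and $\psi_6$ approach finite limits. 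This classification fixes which expansions must be kept beyond zeroth order.

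For the GSOP, I would first replace $\mathrm{erfc}\!\left[\psi_3(\eta_1\tau_1\Omega_{\mathrm{SR}}+\Omega_{\mathrm{RD}}\sqrt{\psi_2})\right]$ by $1-\tfrac{2}{\sqrt{\pi}}\psi_3(\eta_1\tau_1\Omega_{\mathrm{SR}}+\Omega_{\mathrm{RD}}\sqrt{\psi_2})+\mathcal{O}(1/\gamma_P)$, and each remaining exponential $e^{-\psi_i}$ by $1-\psi_i+\mathcal{O}(1/\gamma_P^2)$. After cancelling the constants (the $+1$ and the zeroth-order contributions of the exponentials) term by term, the surviving pieces are precisely the $\sqrt{\pi\eta_1\tau_1\Omega_{\mathrm{SR}}/(4\gamma_P\eta_2\eta_3\Omega_{\mathrm{RD}}^2)}$ term (coming from the $\sqrt{\pi}\eta_1\tau_1\Omega_{\mathrm{SR}}\psi_3 e^{\psi_6}$ prefactor multiplied by the $1$ in the erfc expansion) and a linear combination of the leading parts of $\psi_4,\psi_5,\psi_7,\psi_8$. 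Collecting these gives~\eqref{eq:gsopasymp}; the $\gamma_P^{-1/2}$ term dominates, which immediately identifies the diversity order as $1/2$.

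For the AFE, I would asymptotically expand the CDF $F_\Phi(\phi)$ of Appendix~\ref{app:A} in the same spirit: at high SNR, $F_\Phi(\phi)$ admits a leading term proportional to $1/\sqrt{\gamma_P}$ (responsible for the $\sqrt{\pi}$ piece in~\eqref{eq:afeasymp}) plus a $1/\gamma_P$ correction. Substituting this expansion into~\eqref{eq:afep}, the remaining work is the single integral $\int_1^{2^{2R_S}}\phi^{-1}F_\Phi(\phi)\,d\phi$, which at leading order reduces to an elementary integral of the form $\int_1^{2^{2R_S}}\phi^{-1}(a+b\sqrt{\phi}+c\phi)\,d\phi$; these yield the $\ln(2^{2R_S})$ and $(\sqrt{2^{2R_S}}-1)$, $(2^{2R_S}-1)$ factors visible in~\eqref{eq:afeasymp}. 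The AILR asymptote then follows verbatim from $R_L^{\infty}=(1-\bar\Delta^{\infty})R_S$ via~\eqref{eq:rl}.

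The main obstacle, and the step where the most care is needed, is the bookkeeping in the GSOP expansion: several $\psi_i$'s contain both a term linear in $\tau_1-1$ and a term involving the square-root factor $\psi_1$, and these cross-cancel between $e^{-\psi_5}$, $e^{\psi_7}$, and $e^{-\psi_8}$. One must expand each to the same order in $1/\gamma_P$ and verify that the $\sqrt{\psi_2}$ contribution appearing inside the erfc exactly matches (with opposite sign) a corresponding piece in $\psi_7$, leaving only the clean expression in~\eqref{eq:gsopasymp}. A secondary subtlety is that the third term of~\eqref{eq:gsopfinal}, $(e^{-\psi_5}-e^{-\psi_4})e^{-\eta_1\Omega_{\mathrm{SR}}\psi_5/[\Omega_{\mathrm{RD}}(\eta_2+\eta_3)]}$, is a product of two quantities whose difference is already $\mathcal{O}(1/\gamma_P)$, so the outer exponential must be replaced by $1$ at this order; otherwise a spurious $1/\gamma_P^2$ contribution would be introduced. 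Once these cancellations are verified symbolically, the three claimed asymptotes follow directly.
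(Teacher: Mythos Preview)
Your proposal is correct and shares the same underlying idea as the paper---a first-order Maclaurin expansion $e^{-x}\simeq 1-x$ in the high-SNR regime---but you apply it at a different stage. The paper does \emph{not} expand the final closed form~\eqref{eq:gsopfinal}; instead it goes back to the integral building blocks $T_1,T_2,T_3,T_4$ in Appendix~\ref{app:A}, replaces the exponential CDFs of $g_{\mathrm{SR}}$ and $g_{\mathrm{RD}}$ by their linearizations \emph{inside the integrands}, and then evaluates the resulting elementary integrals from scratch. This sidesteps the $\mathrm{erfc}$ altogether (since $T_4$ after linearization is just a polynomial integral) and makes the cross-cancellations you flag between $\psi_5,\psi_7,\psi_8,\sqrt{\psi_2}$ automatic rather than something to be tracked by hand. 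Your route---expanding~\eqref{eq:gsopfinal} directly and Taylor-expanding $\mathrm{erfc}$ as well---is equally valid and arguably more self-contained, but it trades conceptual simplicity for heavier bookkeeping; the paper's route trades a second round of (now trivial) integration for a cleaner algebraic collapse. For the AFE, your plan coincides with the paper's: substitute the asymptotic $F_\Phi(\phi)$ (i.e.,~\eqref{eq:gsopasymp} with $\tau_1\mapsto\phi$) into~\eqref{eq:afep} and integrate termwise, and the AILR follows immediately from~\eqref{eq:rl}.
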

	
	\begin{proof}
		The asymptotic expressions are derived by considering the high SNR regime, i.e., with $\gamma_P \rightarrow \infty$. Then, by neglecting the higher order terms of the Maclaurin series expansion for the exponential function in~\cite[Eq. (0.318.2)]{book:gradshteyn}, we have that $e^{-x}\simeq1-x$. After replacing this into the CDFs of expressions~\eqref{eq:T1fin},~\eqref{eq:T2fin},~\eqref{eq:T3fin}, and~\eqref{eq:T4fin} in appendix~A, solving the integrals, and performing some simplifications, the expression in~\eqref{eq:gsopasymp} can be obtained. can be obtained. Finally, by replacing~\eqref{eq:gsopasymp} into~\eqref{eq:afep}, it yields the expression in~\eqref{eq:afeasymp}. 
	\end{proof}
	\begin{remark}
	From~\eqref{eq:gsopasymp}, it can be infer that the system diversity order equals $\sim\tfrac{1}{2}$.
	\end{remark}
	}
	
	\section{ Throughput-Constrained Secrecy Performance Analysis}
	\label{sec:throughput}
	
	In this section, the previous secrecy performance metrics are optimized by taking into account a minimum throughput of confidential transmission constraint, $\Gamma$. { We define the throughput of confidential transmission, $\mathcal{T}$, as the probability of successful transmission times the secrecy rate. Then, an approximate anlytical expression for the throughput of the confidential transmission can be obtained as stated in Proposition~\ref{prop:throughput}}
	
	\begin{proposition}\label{prop:throughput}
		An approximation for the throughput of confidential transmission for a three-node AF relaying network with an untrusted relay is given by
		\begin{align}\label{eq:throughput}
		\mathcal{T} \approx R_S \exp\left[-\frac{\left(2^{2 R_T}-1\right) (\eta_1 \Omega_{\mathrm{SR}}+\Omega_{\mathrm{RD}} (\eta_2+\eta_3))}{\gamma_P \eta_1 \eta_2 \Omega_{\mathrm{RD}} \Omega_{\mathrm{SR}}}\right],
		\end{align}
		where $R_T$ is the codeword transmission rate.
	\end{proposition}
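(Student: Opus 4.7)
The plan is to unpack the definition of throughput into the product of confidential rate and probability of successful transmission on the legitimate link, and then to evaluate that probability by conditioning and high-SNR simplification. Since the wiretap code in Section~\ref{sec:system_model} guarantees reliable decoding at the destination whenever the legitimate channel can support the codeword rate $R_T$, we have $\mathcal{T}=R_S\,\Pr(C_L\geq R_T)$. Using $C_L=\tfrac{1}{2}\log_2(1+\Gamma_{\mathrm{D}})$ and letting $\gamma_{\mathrm{th}}\triangleq 2^{2R_T}-1$, this reduces to $\mathcal{T}=R_S\,\Pr(\Gamma_{\mathrm{D}}\geq\gamma_{\mathrm{th}})$.

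Next, I would substitute~\eqref{eq:snrD2} and recast the event as $\{\gamma_{\mathrm{SR}}\gamma_{\mathrm{RD}}\geq\gamma_{\mathrm{th}}(\gamma_{\mathrm{SR}}+\gamma_{\mathrm{RD}}+\gamma_{\mathrm{DR}}+1)\}$, where the three SNRs are independent exponential with means $\bar\gamma_{\mathrm{SR}}=\gamma_P\eta_1\Omega_{\mathrm{SR}}$, $\bar\gamma_{\mathrm{RD}}=\gamma_P\eta_2\Omega_{\mathrm{RD}}$, and $\bar\gamma_{\mathrm{DR}}=\gamma_P\eta_3\Omega_{\mathrm{RD}}$. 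Conditioning on $\gamma_{\mathrm{RD}}=y$ (necessarily $y>\gamma_{\mathrm{th}}$) and $\gamma_{\mathrm{DR}}=z$, the inner probability over $\gamma_{\mathrm{SR}}$ collapses to $\exp[-\gamma_{\mathrm{th}}(y+z+1)/((y-\gamma_{\mathrm{th}})\bar\gamma_{\mathrm{SR}})]$. Substituting $u=y-\gamma_{\mathrm{th}}$, averaging over $z$ with the MGF of an exponential RV, and invoking the identity $\int_0^\infty e^{-\alpha u-\beta/u}du=2\sqrt{\beta/\alpha}\,K_1(2\sqrt{\alpha\beta})$ produces an exact one-fold representation in terms of a modified Bessel function of the second kind.

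To reach the simple closed-form in~\eqref{eq:throughput}, I would specialize this exact expression to the medium-to-high-SNR regime, where the reciprocal mean SNRs are small. The small-argument limit $2\sqrt{u}\,K_1(2\sqrt{u})\to 1$, together with the approximation $1/(1+x)\approx e^{-x}$ applied to the jamming-averaging factor $\bar\gamma_{\mathrm{SR}}/(\bar\gamma_{\mathrm{SR}}+\gamma_{\mathrm{th}}\bar\gamma_{\mathrm{DR}})$, allows the three contributions to be consolidated into a single exponential $\exp[-\gamma_{\mathrm{th}}(\bar\gamma_{\mathrm{SR}}+\bar\gamma_{\mathrm{RD}}+\bar\gamma_{\mathrm{DR}})/(\bar\gamma_{\mathrm{SR}}\bar\gamma_{\mathrm{RD}})]$. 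Factoring $\gamma_P$ out of each mean, the numerator reads $\eta_1\Omega_{\mathrm{SR}}+(\eta_2+\eta_3)\Omega_{\mathrm{RD}}$ and the denominator $\gamma_P\eta_1\eta_2\Omega_{\mathrm{SR}}\Omega_{\mathrm{RD}}$, which upon multiplication by $R_S$ reproduces~\eqref{eq:throughput}.

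The hardest step will be book-keeping the approximation errors so that all three $1/\bar\gamma$ terms end up in the exponent: the Bessel-function limit alone recovers only the two hop contributions $1/\bar\gamma_{\mathrm{SR}}+1/\bar\gamma_{\mathrm{RD}}$, and the jamming-induced piece $\bar\gamma_{\mathrm{DR}}/(\bar\gamma_{\mathrm{SR}}\bar\gamma_{\mathrm{RD}})$ must be extracted from the first-order correction to the $\gamma_{\mathrm{DR}}$-averaging factor before exponentiating. Once the three contributions are collected to leading order in $1/\gamma_P$, no residual higher-order terms obstruct the announced closed-form.
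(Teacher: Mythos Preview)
Your derivation has a genuine gap: you treat $\gamma_{\mathrm{RD}}$ and $\gamma_{\mathrm{DR}}$ as \emph{independent} exponential random variables and average over $\gamma_{\mathrm{DR}}=z$ via an MGF. Under the reciprocity assumption stated right after~\eqref{eq:snrD2}, the channel gains satisfy $g_{\mathrm{DR}}=g_{\mathrm{RD}}$, so $\gamma_{\mathrm{DR}}=(\eta_3/\eta_2)\,\gamma_{\mathrm{RD}}$ is a deterministic function of $\gamma_{\mathrm{RD}}$; there are only two random variables in the problem, not three. The ``jamming-averaging factor'' $\bar\gamma_{\mathrm{SR}}/(\bar\gamma_{\mathrm{SR}}+\gamma_{\mathrm{th}}\bar\gamma_{\mathrm{DR}})$ you obtain from the MGF therefore has no standing, and the step you flag as hardest---extracting the $\bar\gamma_{\mathrm{DR}}/(\bar\gamma_{\mathrm{SR}}\bar\gamma_{\mathrm{RD}})$ contribution from that factor---is an artifact of the error. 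If you instead substitute $z=(\eta_3/\eta_2)y$ directly, the inner probability already contains the factor $\exp[-\gamma_{\mathrm{th}}(\eta_2+\eta_3)/(\eta_2\bar\gamma_{\mathrm{SR}})]$ before any $u$-integration, and the Bessel limit alone finishes the job; the bookkeeping difficulty you anticipate vanishes.

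The paper takes a much shorter route. It reuses the harmonic-mean upper bound~\eqref{eq:snrDapprox} already invoked in Appendix~\ref{app:A}, namely $\Gamma_{\mathrm{D}}\lesssim \tfrac{\gamma_{\mathrm{R}}}{\gamma_{\mathrm{R}}+\gamma_{\mathrm{D}}}\min\{\gamma_{\mathrm{S}}g_{\mathrm{SR}},(\gamma_{\mathrm{R}}+\gamma_{\mathrm{D}})g_{\mathrm{RD}}\}$. The success event $\{\Gamma_{\mathrm{D}}\geq\tau_2\}$ then splits into the rectangle $\{g_{\mathrm{SR}}\geq \tau_2(\gamma_{\mathrm{R}}+\gamma_{\mathrm{D}})/(\gamma_{\mathrm{R}}\gamma_{\mathrm{S}})\}\cap\{g_{\mathrm{RD}}\geq \tau_2/\gamma_{\mathrm{R}}\}$, whose probability is simply the product of two exponential tails; multiplying them gives~\eqref{eq:throughput} in one line with no Bessel functions and no second layer of approximation. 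Your exact-SINR route (once the dependence is fixed) is a legitimate alternative and arguably yields a tighter intermediate expression, but the paper's min-bound argument is the one that matches the approximation philosophy used throughout and reaches the stated closed form with minimal effort.
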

	\begin{proof}
		The proof is provided in Appendix~\ref{app:B}.
	\end{proof}
	\begin{proposition}
		\label{cor:thrfeas}
		The maximum achievable throughput constrained to $R_T\geq R_S>0$, for a given power allocation factor at D, $\eta_3$, is given by
		\begin{align}
		\mathcal{T_{\mathrm{max}}}\!=\! R_{S}^{\mathcal{T}} \exp\!\left[\frac{\left(1\!-\!2^{2 R_{S}^{\mathcal{T}}}\right) (\Omega_{\mathrm{RD}} ( \eta_2^{\mathcal{T}}\!+\! \eta_3)\!+\! \Omega_{\mathrm{SR}} (1\!-\!\eta_2^{\mathcal{T}}\!-\!\eta_3))}{\gamma_P \eta_2^{\mathcal{T}} \Omega_{\mathrm{RD}} \Omega_{\mathrm{SR}} (1\!-\!\eta_2^{\mathcal{T}}\!-\!\eta_3)}\right],
		\end{align}
		where 
		\begin{scriptsize}
		\begin{align}
		R_{S}^{{\mathcal{T}}}&=\dfrac{1}{ \ln 4}W\left(-\frac{\gamma_P \eta_2^{\mathcal{T}} \Omega_{\mathrm{RD}} \Omega_{\mathrm{SR}} (1-\eta_2^{\mathcal{T}}-\eta_3)}{(\eta_2^{\mathcal{T}}+\eta_3) (\Omega_{\mathrm{RD}}-\Omega_{\mathrm{SR}})+\Omega_{\mathrm{SR}}}\right), \label{eq:Rsmaxt}\\
		\eta_2^{\mathcal{T}}&\!=\!\left\lbrace 
		\begin{array}{ll}
		\tfrac{1}{2}-\tfrac{1}{2}\eta_3, & \mathrm{if}\ \Omega_{\mathrm{SR}}\!=\! \Omega_{\mathrm{RD}}  \\ 
		-\sqrt{\frac{\Omega_{\mathrm{RD}} (\eta_3 (\Omega_{\mathrm{RD}}-\Omega_{\mathrm{SR}})+\Omega_{\mathrm{SR}})}{(\Omega_{\mathrm{RD}}-\Omega_{\mathrm{SR}})^2}}\!-\!\eta_3\!-\!\frac{\Omega_{\mathrm{SR}}}{\Omega_{\mathrm{RD}}-\Omega_{\mathrm{SR}}} &  \mathrm{if}\ \Omega_{\mathrm{SR}}\!>\! \Omega_{\mathrm{RD}}\\
		\sqrt{\frac{\Omega_{\mathrm{RD}} (\eta_3 (\Omega_{\mathrm{RD}}-\Omega_{\mathrm{SR}})+\Omega_{\mathrm{SR}})}{(\Omega_{\mathrm{RD}}-\Omega_{\mathrm{SR}})^2}}\!-\!\eta_3\!-\!\frac{\Omega_{\mathrm{SR}}}{\Omega_{\mathrm{RD}}-\Omega_{\mathrm{SR}}} & \mathrm{if}\ \Omega_{\mathrm{SR}}\!<\! \Omega_{\mathrm{RD}}  .
		\end{array}  \right. \label{eq:etamax}
		\end{align}
		\end{scriptsize}
	\end{proposition}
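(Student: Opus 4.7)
The plan is to maximize $\mathcal{T}$ in~\eqref{eq:throughput} jointly over $R_T\ge R_S>0$ and $\eta_2\in(0,1-\eta_3)$, with $\eta_1=1-\eta_2-\eta_3$ and $\eta_3$ held fixed. First, the dependence of~\eqref{eq:throughput} on $R_T$ appears only through the strictly decreasing factor $\exp[-c(2^{2R_T}-1)]$ with $c>0$, so the maximum over $R_T\ge R_S$ is attained at the lower boundary $R_T=R_S$. After this substitution the throughput reduces to $\mathcal{T}=R_S\exp[-A(\eta_2)(2^{2R_S}-1)]$, where
\begin{align*}
A(\eta_2)=\frac{\Omega_{\mathrm{SR}}+(\eta_2+\eta_3)(\Omega_{\mathrm{RD}}-\Omega_{\mathrm{SR}})}{\gamma_P(1-\eta_2-\eta_3)\,\eta_2\,\Omega_{\mathrm{RD}}\,\Omega_{\mathrm{SR}}} .
\end{align*}

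Next, the first-order condition $\partial\mathcal{T}/\partial R_S=0$ reads $2(\ln 2)A\,R_S\,2^{2R_S}=1$. The substitution $y=(\ln 4)R_S$ turns it into $ye^y=1/A(\eta_2)$, whose principal-branch inversion gives $R_S^{\mathcal{T}}=W(1/A(\eta_2))/\ln 4$; this is the expression~\eqref{eq:Rsmaxt} once $\eta_2=\eta_2^{\mathcal{T}}$ is inserted.

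For the $\eta_2$-step, the chain rule along the optimal $R_S(A)$ curve yields $d\mathcal{T}/dA=-R_S(2^{2R_S}-1)\exp[-A(2^{2R_S}-1)]<0$ after use of the stationarity relation from the previous step, so maximizing $\mathcal{T}$ reduces to \emph{minimizing} $A(\eta_2)$, i.e., to solving $A'(\eta_2)=0$. Differentiating $A$, clearing denominators, and collecting terms in $\eta_2$ produces the quadratic
\begin{align*}
(\Omega_{\mathrm{RD}}-\Omega_{\mathrm{SR}})\eta_2^2 + 2[\eta_3(\Omega_{\mathrm{RD}}-\Omega_{\mathrm{SR}})+\Omega_{\mathrm{SR}}]\eta_2 - (1-\eta_3)[\eta_3(\Omega_{\mathrm{RD}}-\Omega_{\mathrm{SR}})+\Omega_{\mathrm{SR}}]=0 .
\end{align*}
When $\Omega_{\mathrm{SR}}=\Omega_{\mathrm{RD}}$ this collapses to $2\Omega_{\mathrm{SR}}\eta_2=(1-\eta_3)\Omega_{\mathrm{SR}}$, producing the first case of~\eqref{eq:etamax}. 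Otherwise the discriminant factors as $(\alpha\eta_3+\Omega_{\mathrm{SR}})\Omega_{\mathrm{RD}}$ with $\alpha=\Omega_{\mathrm{RD}}-\Omega_{\mathrm{SR}}$, which is exactly the square-root argument in~\eqref{eq:etamax}; a sign check on $\Omega_{\mathrm{RD}}-\Omega_{\mathrm{SR}}$ then isolates the $+$ root for $\Omega_{\mathrm{SR}}<\Omega_{\mathrm{RD}}$ and the $-$ root for $\Omega_{\mathrm{SR}}>\Omega_{\mathrm{RD}}$ as the one lying in $(0,1-\eta_3)$.

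Finally, substituting $\eta_2^{\mathcal{T}}$ into $A(\cdot)$ and then $R_S=R_S^{\mathcal{T}}$ into $\mathcal{T}=R_S\exp[-A(2^{2R_S}-1)]$, and rewriting $-(2^{2R_S^{\mathcal{T}}}-1)=1-2^{2R_S^{\mathcal{T}}}$, reproduces the stated closed form for $\mathcal{T}_{\max}$. The hard part will be the $\eta_2$-step: justifying the reduction to $A'(\eta_2)=0$ via the sign of $d\mathcal{T}/dA$, carrying out the algebra cleanly enough to recognize the square-root form of~\eqref{eq:etamax}, selecting the correct root by the feasibility analysis above, and checking second-order conditions to certify that the stationary point is a genuine maximum rather than a saddle.
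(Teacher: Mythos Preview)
Your proposal is correct and follows essentially the same route as the paper: reduce $R_T$ to $R_S$ by monotonicity, substitute $\eta_1=1-\eta_2-\eta_3$, and solve the two first-order conditions in $R_S$ and $\eta_2$, arriving at the Lambert-$W$ expression and the same quadratic in $\eta_2$ as in the paper's Appendix~C. The only organizational difference is that the paper computes $\partial\mathcal{T}/\partial\eta_2$ directly, whereas you first observe that $\eta_2$ enters only through $A(\eta_2)$ and use the envelope theorem to reduce the $\eta_2$-step to minimizing $A$; this is a slightly cleaner decoupling but yields the identical quadratic and root selection.
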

	\begin{proof}
		The proof is provided in Appendix~\ref{app:C}.
	\end{proof}
	From Proposition~\ref{cor:thrfeas}, it can be noticed that the feasible range for the throughput of confidential information is $0$$<$$\Gamma$$<$$\mathcal{T_{\mathrm{max}}}$. { In light of this, the system parameters can be optimized regarding the proposed secrecy metrics given a minimum required throughput of confidential information, $\Gamma$, which is directly related to the system reliability. That is, by considering the following optimization problems, we can determine the optimal system parameters required to attain the best secrecy and reliable performance for a certain throughput.}
	
	\subsection{Minimization of the Generalized Secrecy Outage Probability}
	For this optimization problem, we would like to find the rate parameters and the power allocation factors that minimize the generalized secrecy outage probability given in~\eqref{eq:gsopfinal}. Therefore, the optimization problem can be formulated as 
	\begin{itemize}
		\item OPA1:
	\begin{align}\label{eq:mingsop}
	&\min_{R_S, R_T, \eta_1,\eta_2,\eta_3}  \mathrm{GSOP}\left(\tau_1\right) \nonumber\\
	\mathrm{s.t.} \quad & \mathcal{T}>\Gamma, R_T\geq R_S>0, \eta_1>0,\eta_2>0,\eta_3>0,\nonumber\\  &\eta_1+\eta_2+\eta_3=1.
	\end{align}
\end{itemize}
	%\vspace{-5mm}
	%\begin{small}
	%\begin{align}
	%\mathrm{s.t.}\ \mathcal{T}>\Gamma,  R_T\geq R_S>0,\eta_1>0,\eta_2>0,\eta_3>0, \eta_1+\eta_2+\eta_3=1,\label{eq:mingsop}
	%\end{align}
	%\end{small}
	\subsection{Maximization of the Average Fractional Equivocation}
	Following the same reasoning, we can find the rate parameters and the power allocation factors that maximizes the average fractional equivocation, $\bar \Delta$, given in~\eqref{eq:afep}. Therefore, the optimization problem can be formulated as 
	\begin{itemize}
		\item OPA2:
	\begin{align}
	\max_{R_S, R_T, \eta_1,\eta_2,\eta_3} & \bar\Delta \nonumber\\
	\mathrm{s.t.} \quad & \mathcal{T}>\Gamma, R_T\geq R_S>0, \eta_1>0,\eta_2>0,\eta_3>0,\nonumber\\  &\eta_1+\eta_2+\eta_3=1.\label{eq:maxafe}
	\end{align}
\end{itemize}
	%\vspace{-5mm}
	%\begin{small}
	%\begin{align}\label{eq:maxafe}
	%\mathrm{s.t.}\ \mathcal{T}>\Gamma,  R_T\geq R_S>0,\eta_1>0,\eta_2>0,\eta_3>0, \eta_1+\eta_2+\eta_3=1.
	%\end{align}
	%\end{small}
	\subsection{Minimization of the Average Information Leakage Rate}
	Finally, we can find the rate parameters and the power allocation factor that minimizes the average fractional equivocation $R_L$, given in~\eqref{eq:rl}. Therefore, the optimization problem can be formulated as
	\begin{itemize}
		\item OPA3:
	\begin{align}
	\min_{R_S, R_T, \eta_1,\eta_2,\eta_3} & R_L \nonumber\\
	\mathrm{s.t.} \quad & \mathcal{T}>\Gamma, R_T\geq R_S>0, \eta_1>0,\eta_2>0,\eta_3>0,\nonumber\\  &\eta_1+\eta_2+\eta_3=1.\label{eq:minailr}
	\end{align}
\end{itemize}
	%\vspace{-5mm}
	%\begin{scriptsize}
	%\begin{align}\label{eq:minailr}
	%\mathrm{s.t.}\ \mathcal{T}>\Gamma,  R_T\geq R_S>0,\eta_1>0,\eta_2>0,\eta_3>0, \eta_1+\eta_2+\eta_3=1.
	%\end{align}
	%\end{scriptsize}
	
	Due to the complexity of the expressions in~\eqref{eq:gsopfinal} and~\eqref{eq:afep}, the optimal values for  $R_{S}^1$,  $R_{T}^1$, $\eta_{1}^1$, $\eta_{2}^1$, and $\eta_{3}^1$, corresponding to the problem OPA1 in~\eqref{eq:mingsop}; $R_{T}^2$, $\eta_{1}^2$, $\eta_{2}^2$, and $\eta_{3}^2$, corresponding to the problem OPA2 in~\eqref{eq:maxafe}; and $R_{S}^3$,  $R_{T}^3$, $\eta_{1}^3$, $\eta_{2}^3$, and $\eta_{3}^3$, corresponding to the problem OPA3 in~\eqref{eq:minailr}, cannot be obtained in closed form. However, a numerical optimization can be performed through optimization tools provided by specialized software such as Wolfram Mathematica or Matlab (e.g. the function \texttt{FindMinimum} of Wolfram Mathematica). Alternatively, by considering the asymptotic expressions in~\eqref{eq:gsopasymp} and~\eqref{eq:afeasymp}, very close results to those obtained by numerical optimization via software tools can be attained by resorting to iterative algorithms to solve optimization problems of continuous nonlinear functions. For instance, in Algorithm~\ref{alg:pso} we apply the Particle Swarm Optmization (PSO) method (cf.~\cite{art:Kennedy1995} for more details), which is a population-based stochastic optimization algorithm, to solve the formulated optimization problems. This algorithm offers a simpler implementation that can be applied in practical systems.

	\begin{algorithm}
	\caption{PSO algorithm}\label{alg:pso}
	\begin{algorithmic}[1]
		\begin{scriptsize}
			\STATE\textbf{function} \:{\texttt{OF}}({$\mathtt{var},\gamma_P, \alpha, \theta, \Omega_{\mathrm{SR}}, \Omega_{\mathrm{RD}}, \Gamma$}) \COMMENT{$\mathtt{var}$ is an array}
			\STATE $\mathtt{of} \gets$ Eq.~\eqref{eq:gsopasymp} \COMMENT{Cost function of GSOP, (-)AFE, or AILR}
			\STATE $\mathtt{penalty} \gets 1$
			\STATE $\mathtt{const1} \gets$ Eq.~\eqref{eq:throughput}
			\STATE $\mathtt{const2} \gets\mathtt{var}(1)+\mathtt{var}(2)+\mathtt{var}(3)$
			\IF {$\mathtt{const1} > \Gamma $ }
			\STATE $\mathtt{penalty} \gets  penvalue$
			\ENDIF
			\IF {$\mathtt{const2}  != 1$ }
			\STATE $\mathtt{penalty} \gets  \mathtt{penalty} *penvalue$
			\ENDIF
			\IF {$\mathtt{var}(5) < \mathtt{var}(4)$ }
			\STATE $\mathtt{penalty} \gets  \mathtt{penalty} *penvalue$
			\ENDIF
			\STATE   $\mathtt{of} \gets \mathtt{of}+\mathtt{penalty}$ 
			\STATE \textbf{return} $\mathtt{of}$
			\STATE \textbf{end fucntion}
		\end{scriptsize}
		\begin{scriptsize}
			\STATE \textbf{function}\:{\texttt{Pso}}({$\gamma_P, \alpha, \theta, \Omega_{\mathrm{SR}}, \Omega_{\mathrm{RD}}, \Gamma$})
			\STATE $\mathtt{iter} \gets \# iterations$
			\STATE $\mathtt{p} \gets \# particles$  \COMMENT{Set population size}
			\STATE $\mathtt{c1} \gets c\_init$ \COMMENT{Weighting coeff. for personal best pos.}
			\STATE $\mathtt{c2} \gets c\_init$ \COMMENT{Weighting coeff. for global best pos.}
			\STATE $\mathtt{w} \gets wvalue$ \COMMENT{Set inertia weight}
			\STATE $\mathtt{n} \gets \#variables$ 
			\STATE $\mathtt{varmin}$ $\gets$ $[\mathtt{eta1min} \mathtt{eta2min} \mathtt{eta3min}  \mathtt{Rsmin} \mathtt{Rtmin}]$\COMMENT{Set min values} 
			
			\STATE $\mathtt{varmax}\gets [\mathtt{eta1max} \mathtt{eta2max} \mathtt{eta3max} \mathtt{Rsmax} \mathtt{Rtmax}]$  \COMMENT{Set max values}
			\STATE $\mathtt{v} \gets initialv$ \COMMENT{Set initial velocity}
			\STATE $\mathtt{fgbest} \gets 10$
			\FOR {$\mathtt{i} \gets 1$ to $\mathtt{p}$}
			%\STATE $\mathtt{sum} \gets 0$ 
			\FOR {$\mathtt{j} \gets 1$ to $\mathtt{n}$}
			\STATE $\mathtt{var}(\mathtt{i},\mathtt{j}) \gets (\mathtt{varmax}(\mathtt{j})-\mathtt{varmin}(\mathtt{j}))*\textbf{rand}()+\mathtt{varmin}(\mathtt{j})$ \COMMENT{Initial position}
			% \STATE$\mathtt{sum} \gets \mathtt{sum} +\mathtt{var}(\mathtt{i},\mathtt{j})$ 
			\STATE $\mathtt{aux}(\mathtt{j}) \gets \mathtt{aux}(\mathtt{i},\mathtt{j})$
			\ENDFOR
			%\For {$\mathtt{j} \gets 1$ to $\mathtt{n}$}
			%   \STATE $\mathtt{var}(\mathtt{i},\mathtt{j}) \gets \mathtt{var}(\mathtt{i},\mathtt{j})/\mathtt{sum}$ \COMMENT{ Variables sum up 1}
			%\EndFor
			\STATE $\mathtt{fitness}(\mathtt{i}) \gets \mathtt{OF}(\mathtt{aux},\gamma_P, \alpha, \theta, \Omega_{\mathrm{SR}}, \Omega_{\mathrm{RD}}, \Gamma)$
			\IF {$\mathtt{fitness}(\mathtt{i}) < \mathtt{fgbest}$ }
			\STATE $\mathtt{fgbest} \gets  \mathtt{fitness}(\mathtt{i})$
			\FOR {$\mathtt{j} \gets 1 \mathrm{to} \mathtt{n}$}
			\STATE $\mathtt{vargbest}(\mathtt{j}) \gets  \mathtt{var}(\mathtt{i},\mathtt{j})$
			\ENDFOR
			\ENDIF
			\ENDFOR
			\STATE $\mathtt{varpbest} \gets  \mathtt{var}$
			\STATE $\mathtt{fpbest} \gets  \mathtt{fitness}$
			\FOR {$\mathtt{i} \gets 1 \mathtt{to} \mathtt{iter}$}
			\STATE $\mathtt{v} \gets \mathtt{w}*\mathtt{v}+\mathtt{c1}* \textbf{{rand}}(\mathtt{p}, \mathtt{n}) * (\mathtt{varpbest}-\mathtt{var})+ \mathtt{c2}*\textbf{rand}(\mathtt{p}, \mathtt{n}) * (\mathtt{vargbest}-\mathtt{var})) $ \COMMENT{Update velocity (element-wise operation)}
			\STATE $\mathtt{var} \gets \mathtt{var}+\mathtt{v}$ \COMMENT{Update position}
			\FOR {$\mathtt{j} \gets 1\mathrm{to}\mathtt{p}$}
			\FOR {$\mathtt{k} \gets 1\mathrm{to}\mathtt{n}$}
			\IF {$\mathtt{var}(\mathtt{j},\mathtt{k}) < \mathtt{varmin}(\mathtt{k})$ }
			\STATE $\mathtt{var}(\mathtt{j},\mathtt{k}) \gets  \mathtt{varmin}(\mathtt{k})$
			\ENDIF
			\IF {$\mathtt{var}(\mathtt{j},\mathtt{k}) > \mathtt{varmax}(\mathtt{k})$ }
			\STATE $\mathtt{var}(\mathtt{j},\mathtt{k}) \gets  \mathtt{varmax}(\mathtt{k})$
			\ENDIF
			\STATE $\mathtt{aux}(\mathtt{k}) \gets \mathtt{aux}(\mathtt{j},\mathtt{k})$
			\ENDFOR
			\STATE $\mathtt{fitness}(\mathtt{j}) \gets \mathtt{OF}(\mathtt{aux},\gamma_P, \alpha, \theta, \Omega_{\mathrm{SR}}, \Omega_{\mathrm{RD}}, \Gamma)$
			\IF {$\mathtt{fitness}(\mathtt{j}) < \mathtt{fpbest}(\mathtt{j})$ }
			\STATE $\mathtt{fpbest}(\mathtt{j}) \gets  \mathtt{fitness}(\mathtt{j})$
			\FOR {$\mathtt{k} \gets 1\mathrm{to}\mathtt{n}$}
			\STATE $\mathtt{varpbest}(\mathtt{j},\mathtt{k}) \gets  \mathtt{var}(\mathtt{j},\mathtt{k})$ \COMMENT{Up. personal best}
			\ENDFOR
			\ENDIF
			\IF {$\mathtt{fitness}(\mathtt{j}) < \mathtt{fgbest}$ }
			\STATE $\mathtt{fgbest} \gets  \mathtt{fitness}(\mathtt{j})$
			\FOR {$\mathtt{k} \gets 1\mathrm{to}\mathtt{n}$}
			\STATE $\mathtt{vargbest}(\mathtt{k}) \gets  \mathtt{var}(\mathtt{j},\mathtt{k})$ \COMMENT{Up. global best}
			\ENDFOR
			\ENDIF
			\ENDFOR
			\STATE $\mathtt{w} \gets  \mathtt{w}*0.7$ \COMMENT{Decrease inertia weight}
			\ENDFOR
			\STATE \textbf{return} $\mathtt{fgbest}, \mathtt{vargbest}$
			\textbf{end fucntion}
		\end{scriptsize}	
	\end{algorithmic}
\end{algorithm}

	\section{Numerical Results and Discussions}
	\label{sec:results}
	In this section, Monte Carlo simulations are carried out to validate our analytical results for some illustrative cases. For this purpose, it is considered a two-dimensional network topology, where S and D are located at the coordinates $(0,0)$ and $(1,0)$ (assuming normalized distances), respectively, while the untrusted relay is located midway between S and~D. Without loss of generality, it is assumed that the average channel gain for all links is determined by the distance between the respective pair of nodes, i.e., $\Omega_{\mathrm{AB}}=d_{\mathrm{AB}}^{-\alpha}$, with  $\mathrm{A}\mathop{\in}\{\mathrm{S}, \mathrm{R}, \mathrm{D}\}$ and $\mathrm{B}\mathop{\in}\{\mathrm{R},\mathrm{D}\}$, where $d_{\mathrm{AB}}$ is the distance between the corresponding nodes, and $\alpha$ is the path loss exponent (for the evaluated cases, we consider $\alpha=4$). { Moreover, for optimizations we have considered Algorithm~\ref{alg:pso} for a population size of $\# particles= 2000$, $\# iterations=100$, $c\_init=0.3$, and $wvalue=0.3$}. 
	
	Fig.~\ref{fig:gsopvssnr} illustrates the GSOP versus the transmit system SNR for different values of $\theta$, considering $R_S$$=$$1$ bps/Hz, $d=d_{\mathrm{SR}}/d_{\mathrm{SD}}=0.5$ (such that $\Omega_\mathrm{SR}$$=$$\Omega_\mathrm{RD}$), and equal power allocation (EPA) among nodes.
	It can be observed that our approximation is very tight from medium-to-high SNR. Also, notice that, for achieving the same performance in terms of GSOP, a gain of $\sim$6 dB in SNR is obtained when comparing the case in which the capability of the relay to decode the confidential message is very low (e.g. $\theta =0.1$) and that of the classical approach (i.e., $\theta=1$). That is, in scenarios where it is possible a relaxation on the secrecy requirement due to a lower capability of R on decoding the confidential information, important power savings can be obtained by considering the GSOP as a design criterion.
	\begin{figure}[t]\centering
		\psfrag{Generalized Secrecy Outage Probability, GSOP}[Bc][Bc][0.7]{Generalized Secrecy Outage Probability, GSOP}
		\psfrag{System SNR (dB)}[Bc][Bc][0.8]{Transmit system SNR, $\gamma_P$ (dB)}
		\psfrag{theta}[Br][Br][0.8]{$\theta=0.1,0.5,1$}	
		\psfrag{Simulationnn}[Bl][Bl][0.7]{Simulation}	
		\psfrag{Analytical}[Bl][Bl][0.7]{Analytical}
		\psfrag{Asymptotic}[Bl][Bl][0.7]{Asymptotic}
		\psfrag{Average Fractional Equivocation, Deltaav}[Bc][Bc][0.7]{\color{Chocolate3}Average Fractional Equivocation, $\bar \Delta$}
		\includegraphics[width=\linewidth]{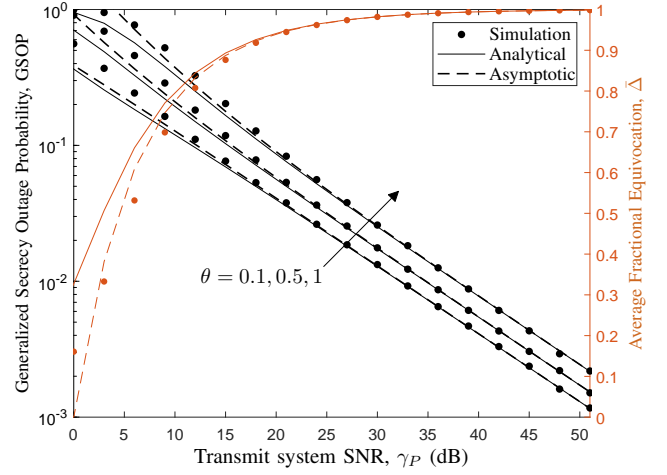}
		\caption{GSOP versus transmit system SNR for different values of $\theta$, considering $R_S=1$ bps/Hz, $\Omega_\mathrm{SR}=\Omega_\mathrm{RD}$, and EPA.}
		\label{fig:gsopvssnr}
	\end{figure}
	
	\begin{figure}[t]\centering
		\psfrag{Generalized Secrecy Outage Probability, GSOP}[Bc][Bc][0.7]{Generalized Secrecy Outage Probability, GSOP}
		\psfrag{Secrecy Rate, RS}[Bc][Bc][0.8]{Secrecy Rate, $R_S$ (bps/Hz)}
		\psfrag{theta}[Br][Br][0.8]{$\theta=1, 0.8, 0.5, 0.2$}	
		\psfrag{Simulationnnnn}[Bl][Bl][0.7]{Simulation}	
		\psfrag{EPA}[Bl][Bl][0.7]{EPA}	
		\psfrag{OPA1}[Bl][Bl][0.7]{OPA1}	
		\psfrag{OPA2}[Bl][Bl][0.7]{OPA2}	
		\psfrag{OPA3}[Bl][Bl][0.7]{OPA3}	
		\psfrag{theta 1}[Bl][Bl][0.8]{$\theta$$=$$1$}
		\psfrag{theta 2 }[Bl][Bl][0.8]{$\theta$$=$$0.5$}
		\psfrag{theta 3 }[Bl][Bl][0.8]{$\theta$$=$$0.1$}		
		\psfrag{OPAREF}[Bl][Bl][0.7]{OPA~\cite{art:Kuhestani2016}}	
		\includegraphics[width=\linewidth]{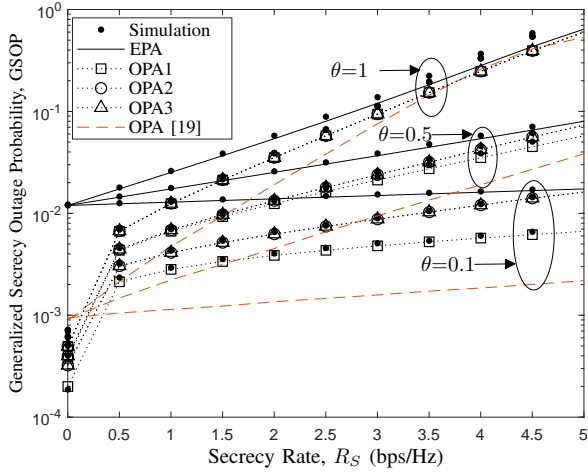}
		\caption{GSOP versus secrecy rate $R_S$ for different values of $\theta$, considering $\gamma_P = 30$ dB and $\Omega_\mathrm{SR}=\Omega_\mathrm{RD}$. A comparison with the OPA scheme in~\cite{art:Kuhestani2016} and the EPA scheme is also shown.}
		\label{fig:gsopvsrs}
	\end{figure}
	
	Fig.~\ref{fig:gsopvsrs} shows the GSOP versus the secrecy rate $R_S$ for different values of $\theta$, considering $\gamma_P$$=$$30$ dB and $d$$=$$0.5$ (such that $\Omega_\mathrm{SR}$$=$$\Omega_\mathrm{RD}$). { This figure also compares different power allocation strategies, namely, EPA, the proposed OPA schemes obtained from the optimization problems in Section~\ref{sec:throughput}, and the OPA presented in~\cite[Eq.~(8)]{art:Kuhestani2016}. First, notice that, even though the OPA scheme of~\cite{art:Kuhestani2016} renders an improved performance regarding our proposed schemes, it noteworthy that the strategy in~\cite{art:Kuhestani2016} is based on instantaneous channel state information (CSI), while our schemes are based on statistical CSI, which makes it more appealing for practical systems, as statistical CSI-based schemes are more robust to imperfect or outdated channel estimation.}
	It can be also observed that, even though an increase on $R_S$ results in a worse secrecy performance for all levels of decodability requirement at R, an increase in $R_S$ results in a more significant loss in secrecy outage performance for the classical approach ($\theta$$=$$1$), thus the secrecy outage probability rapidly achieves 1. However, by relaxing the decodability requirement at R (e.g., for $\theta$$=$$0.5$ and $0.1$), the loss in performance becomes much less pronounced, then higher secrecy rates can be set while attaining the same level of secrecy outage probability. Regarding the power allocation strategies, it is observed that an optimal power allocation scheme greatly improves the secrecy performance leading to ensure secrecy even for higher secrecy rates, and this improvement is more significant as the value of $\theta$ decreases. Additionally, it is observed that as $R_S$ increases, the performance with OPA deteriorates and converges to that with EPA, specially for $\theta=1$, thus, for certain $R_S$, employing an OPA scheme is not advantageous than using equal power allocation factors for all nodes. Also, by comparing the three aforementioned OPA strategies, it is observed that, for this particular case, the three strategies achieve the same performances, except for the case $\theta=0.1$. That is, for scenarios where the decodability at R is low, OPA2 and OPA3 renders the secrecy performance worse than OPA1. All in all, a considerable loss in performance will be attained if the system were designed with  a power allocation that optimize the classical secrecy outage for scenarios that requires a low decodability at R.
	
	A similar analysis is shown in Fig.~\ref{fig:deltavsrs}, where the secrecy performance is measured by the average fractional equivocation (characterized as an asymptotic lower bound of the eavesdropper's decoding error probability) and the average information leakage rate versus the secrecy rate $R_S$, for $\gamma_P$$=$$30$ dB, $R_S$$=$$1$ bps/Hz, and $d$ (i.e., $\Omega_\mathrm{SR}$$=$$\Omega_\mathrm{RD}$). It is observed that higher values of average fractional equivocation can be attained for the proposed system as $R_S$ decreases; on the contrary, the average information leakage rate increases sharply as $R_S$ increases. The same remarks of Fig.~\ref{fig:gsopvsrs} are corroborated regarding the power allocation schemes, that is, OPA1 for $\theta$$=$$1$, OPA2, and OPA3 all attain the same performance, although the improvement with respect to EPA is not much significant. However, when considering OPA1 for $\theta$$=$$0.1$, the performance is compromised.
	\begin{figure}[t]\centering
		\psfrag{Average Fractional Equivocation, Deltaav}[Bc][Bc][0.7]{Average Fractional Equivocation, $\bar \Delta$}
		\psfrag{Average Information Leakage Rate}[Bc][Bc][0.7]{\color{Chocolate3}Average Information Leakage Rate, $R_L$}
		\psfrag{Secrecy Rate, RS}[Bc][Bc][0.8]{Secrecy Rate, $R_S$ (bps/Hz)}
		\psfrag{theta}[Br][Br][0.8]{$\theta=1, 0.8, 0.5, 0.2$}	
		\psfrag{Simulatiionnn}[Bl][Bl][0.7]{Simulation}	
		\psfrag{EPA}[Bl][Bl][0.7]{EPA}	
		\psfrag{OPA1}[Bl][Bl][0.7]{OPA1}	
		\psfrag{OPA2}[Bl][Bl][0.7]{OPA2}	
		\psfrag{OPA3}[Bl][Bl][0.7]{OPA3}	
		\psfrag{theta 1}[Bl][Bl][0.8]{$\theta$$=$$1$}
		\psfrag{theta 3}[Bl][Bl][0.8]{$\theta$$=$$0.1$}		
		\includegraphics[width=\linewidth]{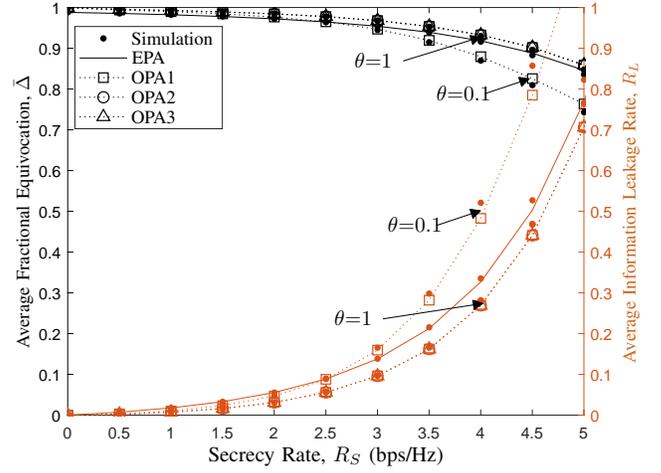}
		\caption{Average fractional equivocation, $\bar \Delta$, and average information leakage rate, $R_L$, versus secrecy rate $R_S$, for $\gamma_P = 30$ dB.}
		\label{fig:deltavsrs}
	\end{figure}
	
	\begin{figure}[t]\centering
		\psfrag{Generalized Secrecy Outage Probability, GSOP}[Bc][Bc][0.7]{Generalized Secrecy Outage Probability, GSOP}
		\psfrag{Distancia}[Bc][Bc][0.8]{Normalized distance, $d $}
		\psfrag{theta}[Br][Br][0.8]{$\theta=1, 0.8, 0.5, 0.2$}	
		\psfrag{Simulationnn}[Bl][Bl][0.7]{Simulation}	
		\psfrag{Analytical}[Bl][Bl][0.7]{EPA}	
		\psfrag{OPA1}[Bl][Bl][0.7]{OPA1}	
		\psfrag{OPA2}[Bl][Bl][0.7]{OPA2}	
		\psfrag{OPA3}[Bl][Bl][0.7]{OPA3}	
		\psfrag{theta 1}[Bl][Bl][0.8]{$\theta$$=$$1,0.5,0.1$}
		\psfrag{theta 2}[Bl][Bl][0.8]{$\theta$$=$$0.1,0.5,1$}
		\psfrag{t3}[Bl][Bl][0.8]{$\theta$$=$$0.1$}		
		\includegraphics[width=\linewidth]{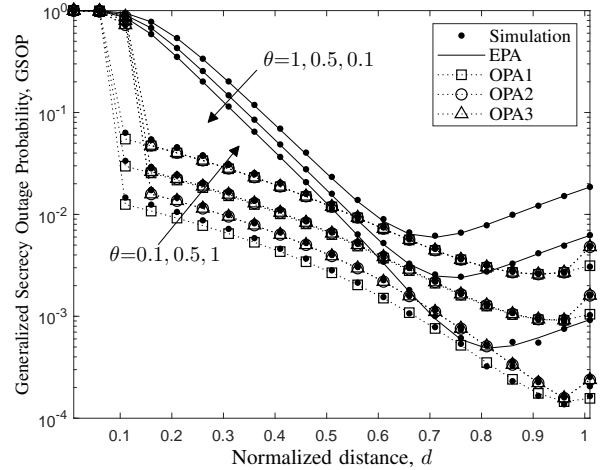}
		\caption{GSOP versus normalized distance $d$, for different values of $\theta$, $R_S=1$ bps/Hz, and $\gamma_P = 30$ dB.}
		\label{fig:gsopvsdist}
	\end{figure}
	
	\begin{figure}[t]\centering
		\psfrag{Average Fractional Equivocation, Deltaav}[Bc][Bc][0.7]{Average Fractional Equivocation, $\bar \Delta$}
		\psfrag{Average Information Leakage Rate}[Bc][Bc][0.7]{\color{Chocolate3}Average Information Leakage Rate, $R_L$}
		\psfrag{Distance}[Bc][Bc][0.8]{Normalized distance, $d  $}
		\psfrag{Secrecy Rate, RS}[Bc][Bc][0.8]{Secrecy Rate, $R_S$ (bps/Hz)}
		\psfrag{theta}[Br][Br][0.8]{$\theta=1, 0.8, 0.5, 0.2$}	
		\psfrag{Simulatiion}[Bl][Bl][0.6]{Simulation}	
		\psfrag{EPA}[Bl][Bl][0.7]{EPA}	
		\psfrag{OPA1}[Bl][Bl][0.7]{OPA1}	
		\psfrag{OPA2}[Bl][Bl][0.7]{OPA2}	
		\psfrag{OPA3}[Bl][Bl][0.7]{OPA3}		
		\psfrag{theta 1}[Bl][Bl][0.8]{$\theta$$=$$1$}
		\psfrag{theta 3}[Bl][Bl][0.8]{$\theta$$=$$0.1$}		
		\includegraphics[width=\linewidth]{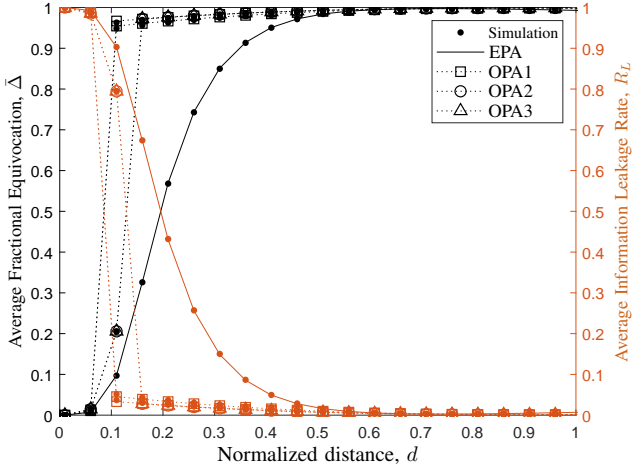}
		\caption{Average fractional equivocation, $\bar \Delta$, and average information leakage rate, $R_L$, versus normalized distance $d$ for $R_S$$=$$1$ bps/Hz and $\gamma_P = 30$ dB.}
	\label{fig:ailvsdist}
\end{figure}

In Figs.~\ref{fig:gsopvsdist} and~\ref{fig:ailvsdist}, the three considered performance metrics are illustrated versus the normalized distance between R and S, $d$, for $\gamma_P$$=$$30$ dB and $R_S$$=$$1$ bps/Hz. For those figures, EPA is compared with the proposed OPA strategies.
From Fig.~\ref{fig:gsopvsdist}, we can notice that there is an optimal relay position for which the secrecy outage probability results in a minimum, which is closer to D as the decoding capability of R decreases. In fact, in the vicinity of D (i.e., for $\Omega_\mathrm{SR}$$<$$\Omega_\mathrm{RD}$), it can be observed a significant loss in performance when considering the classical secrecy outage, while in the vicinity of S (i.e., for $\Omega_\mathrm{SR}$$>$$\Omega_\mathrm{RD}$), the level of decoding capability at R does not significantly impact the secrecy performance, since, at those positions, R is in advantageous channel conditions for eavesdropping (strong wiretap channel and weak jamming channel). Regarding the power allocation strategies, it can be noticed that OPA1, OPA2, and OPA3 all show the same performance except for the case $\theta=0.1$, as previously observed. Even though the difference between OPA1 and the others strategies is minimal for $\theta=0.1$, OPA2 and OPA3 show a performance worse than EPA at the relay positions around the minimum value of GSOP with EPA. Also, it is observed that a significant improvement is achieved by considering OPA instead of EPA, specially for the positions of R closer to S. At the position of the best EPA performance, it can be observed no difference between OPA and EPA, while OPA boosts the secrecy performance at the vicinity of D much more than EPA, thus the best performance is achieved in positions of R very close to D for all values of $\theta$. From Fig.~\ref{fig:ailvsdist}, we can notice that a remarkable difference between EPA and OPA is observed for positions of R closer to S (i.e., for $\Omega_\mathrm{SR}$$>$$\Omega_\mathrm{RD}$), where the channel conditions are advantageous for eavesdropping, while for the positions of R closer to D (i.e., for $\Omega_\mathrm{SR}$$<$$\Omega_\mathrm{RD}$), the secrecy performance, according to these two metrics, is highly favorable using either OPA or EPA. In other words, the secrecy performance in terms of $\bar \delta$ and $R_L$ always improves as R approaches D. Moreover, by considering OPA, independently of the metric being optimized (OPA1, OPA2, or OPA3), the secrecy performance is greatly improved at any position of R. From the above figures, it can be concluded that the secrecy performance from the GSOP perspective or from the other two metrics perspective leads to different insights regarding the optimal position of the relay and the power allocation strategy.
\begin{figure}[t]\centering
	\psfrag{Generalized Secrecy Outage Probability, GSOP}[Bc][Bc][0.7]{Generalized Secrecy Outage Probability, GSOP}
	\psfrag{Minimum required throughput, Gamma}[Bc][Bc][0.8]{Minimum Required Throughput, $\Gamma$}
	\psfrag{theta}[Br][Br][0.8]{$\theta=1, 0.8, 0.5, 0.2$}	
	\psfrag{Simulationnn}[Bl][Bl][0.7]{Simulation}	
	\psfrag{EPA}[Bl][Bl][0.7]{EPA}	
	\psfrag{OPA11}[Bl][Bl][0.7]{OPA1}	
	\psfrag{OPA2}[Bl][Bl][0.7]{OPA2}	
	\psfrag{OPA3}[Bl][Bl][0.7]{OPA3}
	\psfrag{theta 1}[Bl][Bl][0.8]{$\theta$$=$$1$}
	\psfrag{theta 2}[Bl][Bl][0.8]{$\theta$$=$$0.5$}
	\psfrag{theta 3}[Bl][Bl][0.8]{$\theta$$=$$0.1$}		
	\includegraphics[width=\linewidth]{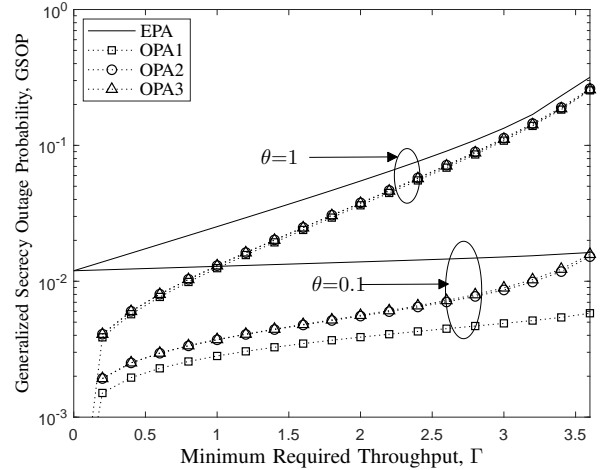}
	\caption{Generalized secrecy outage probability, GSOP, versus minimum required throughput $\Gamma$, for $\theta= 0.1$ and~$1$, and $\gamma_P = 30$ dB.}
	\label{fig:gsopvsgamma}
\end{figure}

\begin{figure}[t]\centering
	\psfrag{Average Fractional Equivocation, Deltaav}[Bc][Bc][0.7]{Average Fractional Equivocation, $\bar \Delta$}
	\psfrag{Minimum required throughput, Gamma}[Bc][Bc][0.8]{Minimum Required Throughput, $\Gamma$}
	\psfrag{Average Information Leakage}[Bc][Bc][0.7]{\color{Chocolate3}Average Information Leakage Rate, $R_L$}
	\psfrag{Secrecy Rate, RS}[Bc][Bc][0.8]{Secrecy Rate, $R_S$ (bps/Hz)}
	\psfrag{theta}[Br][Br][0.8]{$\theta=1, 0.8, 0.5, 0.2$}	
	\psfrag{Simulationnn}[Bl][Bl][0.7]{Simulation}	
	\psfrag{EPA}[Bl][Bl][0.7]{EPA}	
	\psfrag{OPA11}[Bl][Bl][0.7]{OPA1}	
	\psfrag{OPA2}[Bl][Bl][0.7]{OPA2}	
	\psfrag{OPA3}[Bl][Bl][0.7]{OPA3}
	\psfrag{theta 1}[Bl][Bl][0.8]{$\theta$$=$$1$}
	\psfrag{theta 3}[Bl][Bl][0.8]{$\theta$$=$$0.1$}		
	\includegraphics[width=\linewidth]{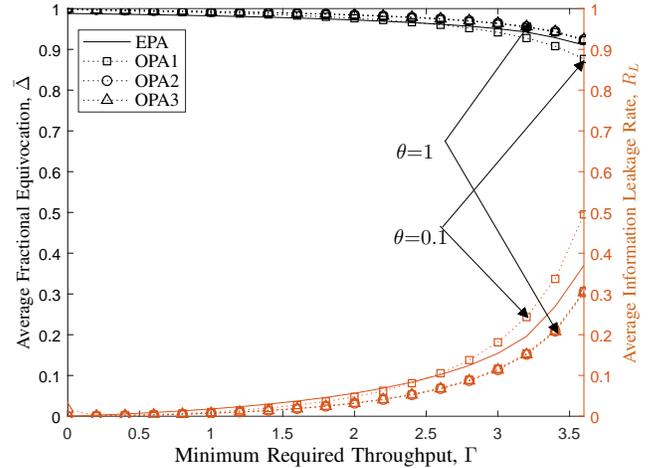}
	\caption{Average fractional equivocation, $\bar \Delta$, and average information leakage rate, $R_L$, versus minimum required throughput $\Gamma$, for $\gamma_P = 30$ dB.}
	\label{fig:afeailrvsgamma}
\end{figure}

Figs.~\ref{fig:gsopvsgamma} and~\ref{fig:afeailrvsgamma} illustrate the three considered metrics versus the minimum required throughput of confidential information, $\Gamma$, for $\gamma_P$$=$$30$ dB. Both figures compare the optimal parameters obtained numerically by solving the optimization problems in~\eqref{eq:mingsop},~\eqref{eq:maxafe}, and~\eqref{eq:minailr}. 

%Different power allocation policies are compared given $\eta_2^{{\mathcal{T}}}$ in~\eqref{eq:etamax} as in the following optimization problems:
%\begin{enumerate}
%	\item OPA1-$\mathcal{T}$:
%	\begin{align}
%	\nonumber &\min_{\eta_3,R_S,R_T} \mathrm{GSOP}\left(\tau_1\right) \big |_{\eta_2=\eta_2^{{\mathcal{T}}},\eta_1=1-\eta_2^{{\mathcal{T}}}-\eta_3} \\
%	&\mathrm{s.t.}\ \mathcal{T}>\Gamma,R_S>R_T>0,\eta_3>0, \eta_{1}+\eta_2^{{\mathcal{T}}}+\eta_3=1.
%	\end{align}
%	\item  OPA2-$\mathcal{T}$:
%	\begin{align}
%	\nonumber & \max_{\eta_3,R_S,R_T} \bar{\Delta}\left(\tau_1\right) \big |_{\eta_2=\eta_2^{{\mathcal{T}}},\eta_1=1-\eta_2^{{\mathcal{T}}}-\eta_3} \\
%	& \mathrm{s.t.}\ \mathcal{T}>\Gamma,R_S>R_T>0,\eta_3>0, \eta_{1}+\eta_2^{{\mathcal{T}}}+\eta_3=1.
%	\end{align}
%	\item OPA3-$\mathcal{T}$:
%	\begin{align}
%	\nonumber & \min_{\eta_3,R_S,R_T} R_L\left(\tau_1\right) \big |_{\eta_2=\eta_2^{{\mathcal{T}}},\eta_1=1-\eta_2^{{\mathcal{T}}}-\eta_3} \\
%	&	\mathrm{s.t.}\ \mathcal{T}>\Gamma,R_S>R_T>0,\eta_3>0, \eta_{1}+\eta_2^{{\mathcal{T}}}+\eta_3=1.
%	\end{align}
%\end{enumerate}

We can observe from~Fig.~\ref{fig:gsopvsgamma} that, by considering GSOP, the secrecy performance deteriorates in a higher rate for the classical approach as the constraint on the throughput of confidential information increases. Also, performing OPA improves the secrecy performance in a more significant manner when the requirement on the decoding capability of R is low (e.g. $\theta{=}0.1$). Besides, as the throughput constraint increases, there is no difference between EPA and OPA for the classical approach, but for lower values of $\theta$, a significant gain on performance by considering OPA1 is observed. On the other hand, regarding the other two metrics, we can observe from Fig.~\ref{fig:afeailrvsgamma} that higher values of $\bar \Delta$ can be achieved until the maximum achievable value of $\mathcal{T}$ is attained ($\sim$ 3.8), which is independent of the adopted power allocation strategy. However, regarding the average information leakage rate, there is a rapidly increasing as $\Gamma$ approaches the maximum achievable value for $\mathcal{T}$, and some differences can be evidenced between the considered power allocation strategies, where a better performance of OPA1 over OPA2, OPA3 is observed for $\theta=0.1$. Therefore, it is evidenced that the allocation of parameters to improve the secrecy performance is not the same regarding the metric that is considered, that is the classical approach, GSOP, AFE, or AILR. 

\begin{figure}[t]\centering
	\psfrag{Throughput, T}[Bc][Bc][0.7]{Throughput, $\mathcal{T}$}
	\psfrag{Secrecy Rate, RS}[Bc][Bc][0.8]{Secrecy Rate, $R_S$ (bps/Hz)}
	\psfrag{RSmax}[Bl][Bl][0.8]{$R_S^{\mathcal{T}}, \eqref{eq:Rsmaxt}$}	
	\psfrag{Simulationnnn}[Bl][Bl][0.7]{Simulation}	
	\psfrag{EPA}[Bl][Bl][0.7]{EPA}	
	\psfrag{eta302}[Bl][Bl][0.6]{$\eta_2^\mathcal{T}, \eta_3=0.2$}	
	\psfrag{eta308}[Bl][Bl][0.6]{$\eta_2^\mathcal{T}, \eta_3=0.8$}	
	\psfrag{pos0.2}[Bl][Bl][0.6]{$\Omega_\mathrm{SR}$$>$$\Omega_\mathrm{RD}$}
	\psfrag{pos0.5}[Bl][Bl][0.6]{$\Omega_\mathrm{SR}$$=$$\Omega_\mathrm{RD}$}	
	\psfrag{pos0.8}[Bl][Bl][0.6]{$\Omega_\mathrm{SR}$$<$$\Omega_\mathrm{RD}$}	
	\psfrag{SOPA3}[Bl][Bl][0.7]{OPA3-$\mathcal{T}$}	
	\psfrag{theta 1}[Bl][Bl][0.8]{$\theta$$=$$1$}
	\psfrag{theta 2}[Bl][Bl][0.8]{$\theta$$=$$0.5$}
	\psfrag{theta 3}[Bl][Bl][0.8]{$\theta$$=$$0.1$}		
	\includegraphics[width=\linewidth]{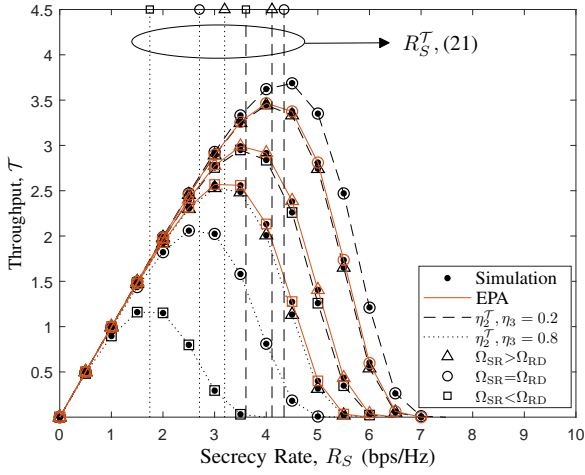}
	\caption{Throughput of confidential transmission versus secrecy rate $R_S$ for the scenarios with EPA and with power allocation according with the optimal value $\eta_2^\mathcal{T}$ from \eqref{eq:etamax}, for different scenarios and $\gamma_P=30$ dB.}
	\label{fig:throughput}
\end{figure}

Finally, Fig.~\ref{fig:throughput} illustrates the throughput of confidential transmission versus the secrecy rate, $R_S$, by comparing EPA and the optimal value for $\eta_2^\mathcal{T}$ from \eqref{eq:etamax}, for different scenarios and $\gamma_P=30$ dB. Notice that, for all cases, it is confirmed that, for the corresponding values of $R_S^\mathcal{T}$ and $\eta_2^\mathcal{T}$, the maximum throughput is attained. Moreover, it can be observed that the best scenario in terms of secrecy throughput is the one where $\Omega_\mathrm{SR}$$=$$\Omega_\mathrm{RD}$ (relay in the midway), and it is allocated less power for jamming. Indeed, for the three scenarios, $\Omega_\mathrm{SR}$$>$$\Omega_\mathrm{RD}$, $\Omega_\mathrm{SR}$$=$$\Omega_\mathrm{RD}$, and $\Omega_\mathrm{SR}$$<$$\Omega_\mathrm{RD}$, the best throughput is achieved when allocating less power to jamming, and EPA performs better than the strategy that considers a higher power for jamming while considering the optimal values for $R_S$ and $\eta_2$. The worst throughput is obtained when $\Omega_\mathrm{SR}$$<$$\Omega_\mathrm{RD}$, and a higher power is intended for jamming.

\section{Conclusions and Future Works}
This paper investigated the secrecy performance on the partial secrecy regime for a three-node amplify-and-forward relaying network with an untrusted relay, in which destination-based jamming is used to enable secure communication. To this end, analytical expressions for three recently proposed secrecy metrics, which are based on the concept of fractional equivocation, were obtained and verified via Monte Carlo simulations. Specifically, a closed-form approximation for the generalized secrecy outage probability was derived, while integral-form approximate expressions were obtained for the average fractional equivocation and the average information leakage rate. Those expressions proved to be very tight to the simulation results at medium-to-high SNR. Numerical results showed that optimal power allocation schemes greatly improve the system secrecy performance, thus reinforcing the advantages of cooperative communications even when the relay is untrusted. However, according to the analyzed metrics, different power allocation strategies and different rate parameters result in different levels of secrecy system performance. Therefore, the optimal power allocation and secrecy rate depend on the secrecy requirements of the system and the capability of the relay in decoding the confidential information. Thus, a significant loss in performance can be obtained when using optimal power allocation according to the classical secrecy outage probability definition, by considering a system with lower requirements on the decoding capability at the relay or lower information leakage rate values. Indeed, the classical approach is highly restrictive in terms of secrecy rate and throughput. Then, for higher values of these two parameters, there is no gain in performance by using optimal power allocation or equal power allocation.

\begin{figure}[t]\centering
		\psfrag{Generalized Secrecy Outage Probability, GSOP}[Bc][Bc][0.7]{Generalized Secrecy Outage Probability, GSOP}
		\psfrag{Secrecy Rate, RS}[Bc][Bc][0.8]{Secrecy Rate, $R_S$ (bps/Hz)}
		\psfrag{theta}[Br][Br][0.8]{$\theta=1, 0.8, 0.5, 0.2$}	
		\psfrag{N=1111111}[Bl][Bl][0.6]{$N=1$}	
		\psfrag{TVT}[Bl][Bl][0.6]{$N=3, [17][21]$}	
		\psfrag{Letterssssssss}[Bl][Bl][0.6]{$N=3, [30]$}	
		\psfrag{theta1}[Bl][Bl][0.8]{$\theta$$=$$1$}
		\psfrag{theta2}[Bl][Bl][0.8]{$\theta$$=$$0.5$}
		\psfrag{theta3}[Bl][Bl][0.8]{$\theta$$=$$0.1$}		
		\psfrag{OPAREF}[Bl][Bl][0.6]{OPA3}	
		\includegraphics[width=\linewidth]{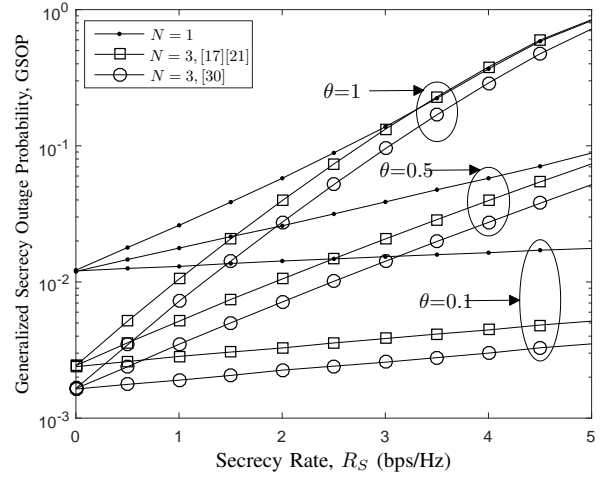}
		\caption{GSOP versus secrecy rate $R_S$ for different values of $\theta$ and number of relays, considering $\gamma_P = 30$ dB and $\Omega_\mathrm{SR}=\Omega_\mathrm{RD}$.}
		\label{fig:gsopvsrs3}
	\end{figure}
	
	\begin{figure}[t]\centering
		\psfrag{Average Fractional Equivocation, Deltaav}[Bc][Bc][0.7]{Average Fractional Equivocation, $\bar \Delta$}
		\psfrag{Average Information Leakage Rate}[Bc][Bc][0.7]{\color{Chocolate3}Average Information Leakage Rate, $R_L$}
		\psfrag{Secrecy Rate, RS}[Bc][Bc][0.8]{Secrecy Rate, $R_S$ (bps/Hz)}
		\psfrag{theta}[Br][Br][0.8]{$\theta=1, 0.8, 0.5, 0.2$}	
		\psfrag{Simulationnnnn}[Bl][Bl][0.6]{$N=1$}	
		\psfrag{REf1}[Bl][Bl][0.6]{$N=3, [17][21]$}	
		\psfrag{Ref2}[Bl][Bl][0.6]{$N=3, [30]$}		
		\includegraphics[width=\linewidth]{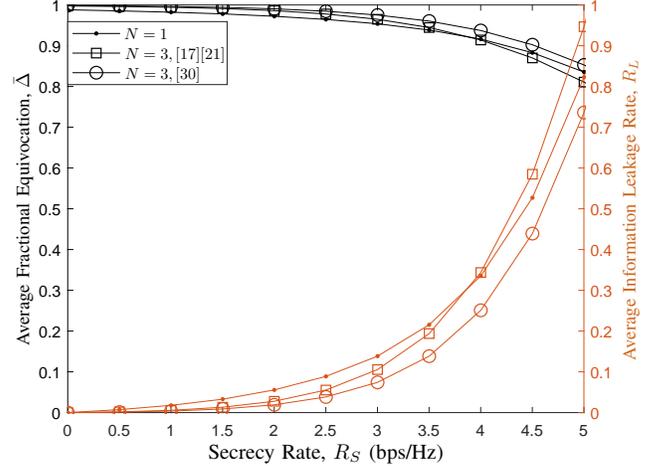}
		\caption{Average fractional equivocation, $\bar \Delta$, and average information leakage rate, $R_L$, versus secrecy rate $R_S$, for $\gamma_P = 30$ dB and different number of relays.}
		\label{fig:deltavsrs3}
	\end{figure}
	
Finally, we point out as future directions the analysis of the multi-relay scenario, where one out of $N$ untrusted relays is selected to cooperate with the legitimate transmission between Alice and Bob. In the case of non-colluding relays, this scenario is similar to that of multiple eavesdroppers, where the eavesdropped information is given by the strongest eavesdropping link. In that case, different scenarios can be tackled. For instance, in~\cite{art:Sun2012} and~\cite{art:osorio2018}, it is assumed that the selected relay uses directional antennas in order to focus the transmission beam toward the destination, during the second transmission phase. Then, the signal received at the non-selected relays from the selected one is weak enough that can be neglected. This way, an information leakage occurs only in the transmission phase from the source. On the other hand, the work in~\cite{art:Sun2015} does not assume directional antennas, thus the non-selected relays can overhear the transmission from the selected relay and try to eavesdrop on it. Thus, the non-selected relays might combine the signals coming from Alice and the selected relay. To counteract this problem, the authors in~\cite{art:Sun2015} proposed a source-based jamming during the second transmission phase in order to enhance the secrecy performance. In Figs.~\ref{fig:gsopvsrs3} and~~\ref{fig:deltavsrs3} are shown some simulations for those protocols, where EPA is considered in all cases, and maximal-ratio combining is considered for the protocol in~\cite{art:Sun2015}. For the multi-relay case, it is considered that the selected relay is the one that maximizes the legitimate link. Note that the jamming and power allocation strategies would play an important role into the gains that can be obtained from the use of multiple relays. Moreover, higher values of $R_S$ are more restricted in the classical secrecy outage probability definition, thus impacting on the amount of information leakage. For instance, due to the jamming signals transmitted during the two phases, it is observed that the average information leakage rate, $R_L$, becomes lower in~\cite{art:Sun2015}. Thus, investigating novel relay selection strategies, protocols and optimal power allocations for the multi-relay scenario would render important insights on the performance of untrusted relay networks in the partial secrecy regime, thereby being a strong motivation for future works.

\vspace{-1cm}

\appendices

\section{Proof of Proposition \ref{prop:gsop}}
\label{app:A}
By replacing~\eqref{eq:Deltau_1br} in~\eqref{eq:gsop}, the GSOP can be expressed as
\begin{align}\label{eq:gsop_phi}
\nonumber \mathrm{GSOP}=& \Pr\left(\Phi \leq 1\right) +\Pr \left(\Phi <2^{2 R_S\theta} \mid  \Phi < 2^{2 R_S} < \Phi 2^{2 R_S} \right) \\
\nonumber & \times \Pr\left(\Phi < 2^{2 R_S} < \Phi 2^{2 R_S}\right)\\
\nonumber \buildrel (a) \over=& F_{\Phi}\left(1\right)+ F_{\Phi}\left(2^{2 R_S\theta}\right) - F_{\Phi}\left(1\right)\\
=&  F_{\Phi}\left(2^{2 R_S\theta}\right),
\end{align}
where $(a)$ was obtained by noticing that $2^{2 R_S\theta} \leq 2^{2 R_S}$. Thus, an expression for the GSOP can be obtained by deriving the CDF of the random variable $\Phi$. However, the exact solution for this problem is mathematically intricate, thus an approximation for this CDF can be obtained by resorting to approximate the harmonic mean by its well known upper bound, $\tfrac{A B}{A + B}< \min\{A,B\}$, as in~\cite{art:osorio}, which renders a tight approximation from medium-to-high SNR. Then,~\eqref{eq:snrD2} can be approximated as 
\begin{align}\label{eq:snrDapprox}
\nonumber \Gamma_{\mathrm{D}} =&  \dfrac{\gamma_{\mathrm{R}}}{\gamma_{\mathrm{R}}+\gamma_{\mathrm{D}}} \dfrac{\gamma_{\mathrm{S}} g_{\mathrm{SR}} \left(\gamma_{\mathrm{R}}+\gamma_{\mathrm{D}}\right) g_{\mathrm{RD}}}{\gamma_{\mathrm{S}} g_{\mathrm{SR}} + \left(\gamma_{\mathrm{R}}+\gamma_{\mathrm{D}}\right) g_{\mathrm{RD}}+1}\\
<&\dfrac{\gamma_{\mathrm{R}}}{\gamma_{\mathrm{R}}+\gamma_{\mathrm{D}}} \min\{\gamma_{\mathrm{S}} g_{\mathrm{SR}},\left(\gamma_{\mathrm{R}}+\gamma_{\mathrm{D}}\right) g_{\mathrm{RD}}\}.
\end{align}

Therefore, a lower bound for the CDF of $\Phi$ can be formulated as
\begin{align}\label{eq:cdfpphi1}
\nonumber	&F_{\Phi}^{\mathrm{LB}}\!\!\left(\phi\right)\!=\! \Pr\left(\dfrac{1+\dfrac{\gamma_{\mathrm{R}}}{\gamma_{\mathrm{R}}+\gamma_{\mathrm{D}}} \min\{\gamma_{\mathrm{S}} g_{\mathrm{SR}},\left(\gamma_{\mathrm{R}}+\gamma_{\mathrm{D}}\right) g_{\mathrm{RD}}\}}{1+ \dfrac{\gamma_{\mathrm{S}} g_{\mathrm{SR}}}{\gamma_{\mathrm{D}}g_{\mathrm{RD}}+1}}<\phi\right)\\
\nonumber &=\!\underbrace{\Pr \!\left(\!\dfrac{\gamma_{\mathrm{S}}\gamma_{\mathrm{R}} g_{\mathrm{SR}}}{\gamma_{\mathrm{R}}\!\!+\!\!\gamma_{\mathrm{D}}}\!\!<\! \phi\!+\!\! \dfrac{\phi \gamma_{\mathrm{S}} g_{\mathrm{SR}}}{\gamma_{\mathrm{D}}g_{\mathrm{RD}}\!+\!\!1}\!-\!1 \Big|  g_{\mathrm{RD}}\!>\!\!\dfrac{\gamma_{\mathrm{S}}g_{\mathrm{SR}}}{\gamma_{\mathrm{R}}\!\!+\!\!\gamma_{\mathrm{D}}} \right)\!\Pr\!\left(\!g_{\mathrm{RD}}\!\!>\!\!\dfrac{\gamma_{\mathrm{S}}g_{\mathrm{SR}}}{\gamma_{\mathrm{R}}\!\!+\!\!\gamma_{\mathrm{D}}} \right) }_{J_1}\\
&+\!\underbrace{\Pr\! \left(\gamma_{\mathrm{R}} g_{\mathrm{RD}}\!\!<\! \phi\!+\! \dfrac{\phi \gamma_{\mathrm{S}} g_{\mathrm{SR}}}{\gamma_{\mathrm{D}}g_{\mathrm{RD}}\!\!+\!\!1}\!\!-\!\!1 \Big|  g_{\mathrm{RD}}\!<\!\!\dfrac{\gamma_{\mathrm{S}}g_{\mathrm{SR}}}{\gamma_{\mathrm{R}}\!\!+\!\gamma_{\mathrm{D}}}  \right)\!\Pr\!\left(g_{\mathrm{RD}}\!<\!\!\dfrac{\gamma_{\mathrm{S}}g_{\mathrm{SR}}}{\gamma_{\mathrm{R}}\!\!+\!\!\gamma_{\mathrm{D}}} \right)}_{J_2},
\end{align}
%\end{scriptsize}
where the terms $J_1$ and $J_2$ can be reorganized and further derived as showing next. First, by solving $J_1$ in terms of $g_\mathrm{SR}$ and $g_\mathrm{RD}$, this can be splitted into two probability terms, regarding the valid regions for $g_\mathrm{SR}$, and reexpressed as 
%\begin{scriptsize}
\begin{align}\label{eq:j1}
\nonumber &J_1= \Pr\! \bigg(\! g_{\mathrm{RD}}\!>\!\dfrac{\left(\phi\!-\!1\right)\left(\gamma_{\mathrm{R}}\!+\!\gamma_{\mathrm{D}}\right)\!+\! g_{\mathrm{SR}} \gamma_{\mathrm{S}} \phi \left(\gamma_{\mathrm{R}}\!+\!\gamma_{\mathrm{D}}\right)\!-\! g_{\mathrm{SR}} \gamma_{\mathrm{R}} \gamma_{\mathrm{S}} }{g_{\mathrm{SR}} \gamma_{\mathrm{S}} \gamma_{\mathrm{R}}  \gamma_{\mathrm{D}}\! -\! \left(\phi\! -\!1\right)  \gamma_{\mathrm{D}} \left( \gamma_{\mathrm{R}}\!+\!\gamma_{\mathrm{D}}\right)},\\ 
\nonumber & g_{\mathrm{SR}}\! <\! \dfrac{\left(\phi \!-\!1\right)  \left(\gamma_{\mathrm{R}}\! +\! \gamma_{\mathrm{D}}\right) }{\gamma_{\mathrm{S}} \gamma_{\mathrm{R}}} ,
g_{\mathrm{RD}}\!>\!\dfrac{\gamma_{\mathrm{S}}g_{\mathrm{SR}}}{\gamma_{\mathrm{R}}\!+\!\gamma_{\mathrm{D}}}\!  \bigg)\\
\nonumber&+ \Pr\bigg(g_{\mathrm{RD}}\!<\!\dfrac{\left(\phi\!-\!1\right)\left(\gamma_{\mathrm{R}}\!+\!\gamma_{\mathrm{D}}\right)\!+\! g_{\mathrm{SR}} \gamma_{\mathrm{S}} \phi \left(\gamma_{\mathrm{R}}\!+\!\gamma_{\mathrm{D}}\right)\!-\! g_{\mathrm{SR}} \gamma_{\mathrm{R}} \gamma_{\mathrm{S}}   }{g_{\mathrm{SR}} \gamma_{\mathrm{S}} \gamma_{\mathrm{R}}  \gamma_{\mathrm{D}} - \left(\phi -1\right)  \gamma_{\mathrm{D}} \left( \gamma_{\mathrm{R}}+  \gamma_{\mathrm{D}}\right)},\\
\nonumber &g_{\mathrm{RD}}\!\geq\!\!\dfrac{\gamma_{\mathrm{S}}g_{\mathrm{SR}}}{\gamma_{\mathrm{R}}\!\!+\!\gamma_{\mathrm{D}}} , g_{\mathrm{SR}}>\dfrac{(\phi\!-\!1) (\gamma_{\mathrm{D}}\!+\!\gamma_{\mathrm{R}})}{\gamma_{\mathrm{R}} \gamma_{\mathrm{S}}},\gamma_{\mathrm{S}}^2 \gamma_{\mathrm{R}} \gamma_{\mathrm{D}} g_{\mathrm{SR}}^2 \\
\nonumber &+\! \gamma_{\mathrm{S}} \left(\gamma_{\mathrm{R}} \!\!+\!\! \gamma_{\mathrm{D}}\right) \left(\left(1\!\!-\!\!2 \phi\right)\gamma_{\mathrm{D}}\!\!+\!\! \left(1\!\!-\!\!\phi\right) \gamma_{\mathrm{R}} \right)\! g_{\mathrm{SR}}\!\!-\!\!\left(\phi\!\!-\!\!1\right) \left(\gamma_{\mathrm{R}}\!\! +\!\! \gamma_{\mathrm{D}}\right)^2\!\! <\!\!0 \bigg)\\
\nonumber &= \underbrace{ \Pr \left( g_{\mathrm{SR}} < \dfrac{\left(\gamma_{\mathrm{R}} + \gamma_{\mathrm{D}}\right) \left(\phi-1\right) }{\gamma_{\mathrm{S}}\gamma_{\mathrm{R}}}, g_{\mathrm{RD}} > \dfrac{\gamma_{\mathrm{S}}g_{\mathrm{SR}}}{\gamma_{\mathrm{R}}+\gamma_{\mathrm{D}}}   \right)}_{T_1}\\
\nonumber&+ \Pr\bigg(\underbrace{\dfrac{(\phi -1) (\gamma_{\mathrm{D}}+\gamma_{\mathrm{R}})}{\gamma_{\mathrm{R}} \gamma_{\mathrm{S}}}<g_{\mathrm{SR}}<\dfrac{(\gamma_{\mathrm{D}}+\gamma_{\mathrm{R}}) \varphi_1}{2 \gamma_{\mathrm{D}} \gamma_{\mathrm{R}} \gamma_{\mathrm{S}}},}_{T_{2,1}} \\
& \underbrace{ \frac{\gamma_{\mathrm{S}} g_{\mathrm{SR}}}{\gamma_{\mathrm{D}}\!\!+\!\!\gamma_{\mathrm{R}}}\!\!\leq\!\! g_{\mathrm{RD}}\!\!<\!\!\frac{\gamma_{\mathrm{D}}\!\!+\!\!\gamma_{\mathrm{R}}\!\!-\!\!\phi  (\gamma_{\mathrm{D}}\!\!+\!\!\gamma_{\mathrm{R}}) (\gamma_{\mathrm{S}} g_{\mathrm{SR}}\!\!+\!\!1)\!\!+\!\!\gamma_{\mathrm{R}} \gamma_{\mathrm{S}} g_{\mathrm{SR}}}{\gamma_{\mathrm{D}} \phi  (\gamma_{\mathrm{D}}\!\!+\!\!\gamma_{\mathrm{R}})\!\!-\!\!\gamma_{\mathrm{D}} (\gamma_{\mathrm{D}}\!\!+\!\!\gamma_{\mathrm{R}}\!\!+\!\!\gamma_{\mathrm{R}} \gamma_{\mathrm{S}} g_{\mathrm{SR}})}}_{T_{2,2}}\bigg) ,
\end{align}
with { $ \varphi_1$$=$$\sqrt{\!-2 \phi  \left(2 \gamma_{\mathrm{D}}^2\!+\!\gamma_{\mathrm{D}} \gamma_{\mathrm{R}}\!+\!\gamma_{\mathrm{R}}^2\right)\!+\!\phi ^2 (2 \gamma_{\mathrm{D}}\!+\!\gamma_{\mathrm{R}})^2\!+\!(\gamma_{\mathrm{D}}\!-\!\gamma_{\mathrm{R}})^2}\!+\!\gamma_{\mathrm{D}} (2 \phi\! -\!1)+\gamma_{\mathrm{R}} (\phi \!-\!1)$}, regarding that $g_{\mathrm{SR}}$ and $g_{\mathrm{RD}}$ are exponentially distributed random variables, the probabilities $T_1$ and $T_2$$=$$\Pr \left(T_{2,1},T_{2,2}\right)$ can be obtained as

%\begin{footnotesize}
\begin{align}
\nonumber T_1=& \int_{0}^{\dfrac{\left(\gamma_{\mathrm{R}} + \gamma_{\mathrm{D}}\right) \left(\phi-1\right) }{\gamma_{\mathrm{S}}\gamma_{\mathrm{R}}}} F_{g_{\mathrm{RD}}}\left(\dfrac{\gamma_{\mathrm{S}}}{\gamma_{\mathrm{R}}+\gamma_{\mathrm{D}}} g_{\mathrm{SR}}\right) f_{g_{\mathrm{SR}}}\left(x\right) dx\\
=& -\frac{\Omega_{\mathrm{RD}} (\eta_2+\eta_3) \left(\exp \left(-\frac{(\phi -1) (\eta_1 \Omega_{\textrm{SR}}+\Omega_{\mathrm{RD}} (\eta_2+\eta_3))}{\gamma_P \eta_1 \eta_2 \Omega_{\mathrm{RD}} \Omega_{\textrm{SR}}}\right)-1\right)}{\eta_1 \Omega_{\textrm{SR}}+\Omega_{\mathrm{RD}} (\eta_2+\eta_3)} \label{eq:T1fin} \\
\nonumber T_2=& \int_{\psi_4}^{\psi_5}\!\! F_{g_{\mathrm{RD}}}\left(\frac{\gamma_{\mathrm{D}}+\gamma_{\mathrm{R}}-\phi  (\gamma_{\mathrm{D}}+\gamma_{\mathrm{R}}) (\gamma_{\mathrm{S}} g_{\mathrm{SR}}+1)+\gamma_{\mathrm{R}} \gamma_{\mathrm{S}} g_{\mathrm{SR}}}{\gamma_{\mathrm{D}} \phi  (\gamma_{\mathrm{D}}+\gamma_{\mathrm{R}})-\gamma_{\mathrm{D}} (\gamma_{\mathrm{D}}+\gamma_{\mathrm{R}}+\gamma_{\mathrm{R}} \gamma_{\mathrm{S}} g_{\mathrm{SR}})}\right)\\
\nonumber & \times f_{g_{\mathrm{SR}}}\left(x\right) dx- \int_{\psi_4}^{\psi_5}F_{g_{\mathrm{RD}}}\left(\frac{\gamma_{\mathrm{S}} g_{\mathrm{SR}}}{\gamma_{\mathrm{D}}+\gamma_{\mathrm{R}}}\right) f_{g_{\mathrm{SR}}}\left(x\right) dx\\
\nonumber=&  \int_{\psi_4}^{\psi_5}\!\!\!\!\! F_{g_{\mathrm{RD}}}\left(\frac{\gamma_{\mathrm{D}}\!\!+\!\!\gamma_{\mathrm{R}}\!\!-\!\!\phi  (\gamma_{\mathrm{D}}\!\!+\!\!\gamma_{\mathrm{R}}) (\gamma_{\mathrm{S}} g_{\mathrm{SR}}\!\!+\!\!1)\!\!+\!\!\gamma_{\mathrm{R}} \gamma_{\mathrm{S}} g_{\mathrm{SR}}}{\gamma_{\mathrm{D}} \phi  (\gamma_{\mathrm{D}}\!\!+\!\!\gamma_{\mathrm{R}})\!\!-\!\!\gamma_{\mathrm{D}} (\gamma_{\mathrm{D}}\!\!+\!\!\gamma_{\mathrm{R}}\!\!+\!\!\gamma_{\mathrm{R}} \gamma_{\mathrm{S}} g_{\mathrm{SR}})}\right) \! f_{g_{\mathrm{SR}}}\!\left(x\right)\! dx\\
\nonumber &+\!\! \dfrac{\Omega_{\textrm{RD}} (\eta_2\!\!+\!\!\eta_3)}{\eta_1 \Omega_{\mathrm{SR}}\!\!+\!\!\Omega_{\textrm{RD}} (\eta_2\!\!+\!\!\eta_3)}\!\left(\!-\exp\left[\!{-\frac{\psi_5 (\eta_1 \Omega_{\mathrm{SR}}\!\!+\!\!\Omega_{\textrm{RD}} (\eta_2\!\!+\!\!\eta_3))}{\Omega_{\textrm{RD}}  (\eta_2\!\!+\!\!\eta_3)}}\right] \right.\\
& \left. +\exp\left[{-\dfrac{(\phi -1) (\eta_1 \Omega_{\mathrm{SR}}+\Omega_{\textrm{RD}} (\eta_2+\eta_3))}{\gamma_P \eta_1 \eta_2 \Omega_{\textrm{RD}} \Omega_{\mathrm{SR}}}}\right]\right)-e^{-\psi_4}+e^{-\psi_5}. \label{eq:T2fin}
\end{align}
%\end{footnotesize}

\begin{figure}[h]\centering
	\includegraphics[width=0.5\linewidth]{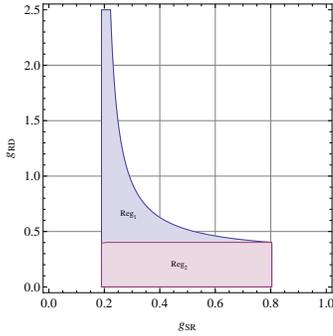}
	\caption{Regions for integral in~\eqref{eq:T2fin}, for $\gamma_P=15$ dB, $\phi=2$, and EPA.}
	\label{fig:integralregions}
\end{figure}

A closed-form approximation for the integral in~\eqref{eq:T2fin}, which is accurate from medium-to-high SNR regime, can be  obtained by approximating the corresponding integral region $\mathrm{Reg}_1$ (almost null at high SNR), which is formed by considering the events $T_{2,1}$ and $T_{2,2}^{r}= g_{\mathrm{RD}}<\frac{\gamma_{\mathrm{D}}+\gamma_{\mathrm{R}}-\phi  (\gamma_{\mathrm{D}}+\gamma_{\mathrm{R}}) (\gamma_{\mathrm{S}} g_{\mathrm{SR}}+1)+\gamma_{\mathrm{R}} \gamma_{\mathrm{S}} g_{\mathrm{SR}}}{\gamma_{\mathrm{D}} \phi  (\gamma_{\mathrm{D}}+\gamma_{\mathrm{R}})-\gamma_{\mathrm{D}} (\gamma_{\mathrm{D}}+\gamma_{\mathrm{R}}+\gamma_{\mathrm{R}} \gamma_{\mathrm{S}} g_{\mathrm{SR}})}$, by an approximate region $\mathrm{Reg}_2$, as depicted in Fig.~\ref{fig:integralregions}, which considers the event $T_{2,2}^{\mathrm{ap}}=g_{\mathrm{RD}}<\dfrac{\varphi_1}{2 \gamma_{\mathrm{D}} \gamma_{\mathrm{R}} }$ instead of the event $T_{2,2}^{r}$. Then, $T_2$ in~\eqref{eq:T2fin} can be approximated by
\begin{align}\label{eq:T2approx}
\nonumber T_2 \approx \tilde{T_2}=& \left(e^{-\psi_5}-e^{-\psi_4}\right)  e^{-\frac{\eta_1\Omega_{\mathrm{SR}}\psi_5}{\Omega_{\mathrm{RD}} (\eta_2+\eta_3)}} + \dfrac{\Omega_{\textrm{RD}} (\eta_2+\eta_3)}{\eta_1 \Omega_{\mathrm{SR}}+\Omega_{\textrm{RD}} (\eta_2+\eta_3)}\\
\nonumber & \times \left(\exp\left[{-\dfrac{(\phi -1) (\eta_1 \Omega_{\mathrm{SR}}+\Omega_{\textrm{RD}} (\eta_2+\eta_3))}{\gamma_P \eta_1 \eta_2 \Omega_{\textrm{RD}} \Omega_{\mathrm{SR}}}}\right] \right.\\
&- \left. \exp\left[{-\frac{\psi_5 (\eta_1 \Omega_{\mathrm{SR}}+\Omega_{\textrm{RD}} (\eta_2+\eta_3))}{\Omega_{\textrm{RD}} (\eta_2+\eta_3)}}\right]\right).
\end{align}

By following the same reasoning, $J_2$ can be also splitted into two probability terms as
\begin{align}\label{eq:J2}
J_2=& \underbrace{\Pr \left( g_{\mathrm{SR}}<\varphi_2, g_{\mathrm{RD}}<\dfrac{\gamma_{\mathrm{S}}g_{\mathrm{SR}}}{\gamma_{\mathrm{R}}+\gamma_{\mathrm{D}}} \right)}_{T_3}+\underbrace{\Pr \bigg( g_{\mathrm{SR}}>\varphi_2,g_{\mathrm{RD}}<\varphi_3 \bigg)}_{T_4},
\end{align}
with
\begin{align}\label{eq:varphi2}
\nonumber &\varphi_2=\dfrac{\phi  \left(2 \gamma_{\mathrm{D}}^2+3 \gamma_{\mathrm{D}} \gamma_{\mathrm{R}}+\gamma_{\mathrm{R}}^2\right)-(\gamma_{\mathrm{D}}+\gamma_{\mathrm{R}})^2}{2 \gamma_{\mathrm{D}} \gamma_{\mathrm{R}} \gamma_{\mathrm{S}}}\\
&\!\!+\!\!\frac{1}{2}\!\! \sqrt{\dfrac{\!\!\phi^2\!\! \left(2 \gamma_{\mathrm{D}}^2\!\!+\!\!3 \gamma_{\mathrm{D}} \gamma_{\mathrm{R}}\!\!+\!\!\gamma_{\mathrm{R}}^2\right)^2\!\!\!-\!\!2 \phi\!\!  \left(2 \gamma_{\mathrm{D}}^2\!\!+\!\!\gamma_{\mathrm{D}} \gamma_{\mathrm{R}}\!\!+\!\!\gamma_{\mathrm{R}}^2\right)\!\! (\gamma_{\mathrm{D}}\!\!+\!\!\gamma_{\mathrm{R}})^2\!\!+\!\!\left(\gamma_{\mathrm{D}}^2\!\!-\!\!\gamma_{\mathrm{R}}^2\right)^2}{\gamma_{\mathrm{D}}^2 \gamma_{\mathrm{R}}^2 \gamma_{\mathrm{S}}^2}},
\end{align}
\begin{align}\label{eq:varphi3}
\nonumber \varphi_3=&\frac{\gamma_{\mathrm{D}} \phi -\gamma_{\mathrm{D}}-\gamma_{\mathrm{R}}}{2 \gamma_{\mathrm{D}} \gamma_{\mathrm{R}}}\\
&+\frac{1}{2} \sqrt{\frac{\gamma_{\mathrm{D}}^2 \phi ^2+(\gamma_{\mathrm{D}}-\gamma_{\mathrm{R}})^2-2 \gamma_{\mathrm{D}} \phi  (\gamma_{\mathrm{D}}-\gamma_{\mathrm{R}}-2 \gamma_{\mathrm{R}} \gamma_{\mathrm{S}} x)}{\gamma_{\mathrm{D}}^2 \gamma_{\mathrm{R}}^2}}
\end{align}
where $T_3$ and $T_4$ can be obtained as
%\begin{footnotesize}
\begin{align}\label{eq:T3fin}
\nonumber T_3=& \int_{0}^{\varphi_2} F_{g_{\mathrm{RD}}}\left(\dfrac{\gamma_{\mathrm{S}}}{\gamma_{\mathrm{R}}+\gamma_{\mathrm{D}}} g_{\mathrm{SR}}\right) f_{g_{\mathrm{SR}}}\left(x\right) dx\\
\nonumber = & \frac{\Omega_{\mathrm{RD}} (\eta_2+\eta_3) \left(\exp\left[-\frac{\eta_1 \psi_5 \Omega_{\mathrm{SR}}+\eta_2 \psi_5 \Omega_{\mathrm{RD}}+\eta_3 \psi_5 \Omega_{\mathrm{RD}}}{\eta_2 \Omega_{\mathrm{RD}} \Omega_{\mathrm{SR}}+\eta_3 \Omega_{\mathrm{RD}} \Omega_{\mathrm{SR}}}\right]-1\right)}{\eta_1 \Omega_{\mathrm{SR}}+\eta_2 \Omega_{\mathrm{RD}}+\eta_3 \Omega_{\mathrm{RD}}}\\
&-e^{-\frac{\psi_5}{\Omega_{\mathrm{SR}}}}+1, \end{align}
\begin{align}\label{eq:T4fin}
\nonumber T_4=& \int_{\varphi_2}^{\infty} F_{g_{\mathrm{RD}}}\left(\varphi_3\right) f_{g_{\mathrm{SR}}}\left(x\right) dx \\
\nonumber	=& \frac{\sqrt{\pi } \eta_1 \phi  \Omega_{\mathrm{SR}}\exp \left(\psi_6\right)}{2 \Omega_{\mathrm{RD}} \sqrt{\gamma_P \eta_1 \eta_2 \eta_3 \phi  \Omega_{\mathrm{SR}}}}   \mathrm{erfc}\left(\frac{\eta_1 \phi  \Omega_{\mathrm{SR}}+\Omega_{\mathrm{RD}} \sqrt{\varphi_4}}{2 \Omega_{\mathrm{RD}} \sqrt{\gamma_P \eta_1 \eta_2 \eta_3 \phi  \Omega_{\mathrm{SR}}}}\right)\\
&+e^{-\frac{\varphi_2}{\Omega_{\mathrm{SR}}}} \left(1-\exp \left(\frac{\eta_2-\eta_3 \phi +\eta_3-\sqrt{\varphi_4}}{2 \gamma_P \eta_2 \eta_3 \Omega_{\mathrm{RD}}}\right)\right),
\end{align}
%\end{footnotesize}
\noindent where $T_4$ can be solved by using~\cite[Eq.~(3.322-1)]{book:gradshteyn},  $\varphi_4=\eta_2^2+\eta_3^2 (\phi -1)^2+2 \eta_2 \eta_3 (2 \gamma_P \eta_1 \varphi_2 \phi +\phi -1)$ and $\psi_6$ is defined in~\ref{eq:gsopfinal}. Finally, by adding the terms $J_1 \approx T_1+\tilde{T_2}$ and $J_2=T_3+T_4$ and making some simplifications,~\eqref{eq:gsopfinal} can be obtained.
\section{Proof of Proposition \ref{prop:afe}}
\label{app:D}

From~\eqref{eq:Deltau_1br}, the average fractional equivocation can be formulated as
\begin{align}\label{eq:afe}
\nonumber	\bar \Delta =& \mathbb{E}\{\Delta\}\\
\nonumber	{\approx}& \int_{2^{2 R_S}}^{\infty} f_{\Phi}\left(\phi\right) d\phi+\int_{1}^{2^{2 R_S}} \dfrac{1}{{2 R_S}} \log_2 \phi f_{\Phi}\left(\phi\right) d\phi\\
\nonumber \stackrel{(b)}{=}&1-F_{\Phi}\left(2^{2 R_S}\right)+\dfrac{1}{\ln(2) {2 R_S}}\left[\ln(\phi) F_{\Phi}\left(\Phi\right) \bigg|_1^{2^{2 R_S}} \right.\\
& \left.-\int_{1}^{2^{2 R_S}}\dfrac{1}{\phi}F_{\Phi}\left(\phi\right)d \phi\right] 	,
\end{align}
where $(b)$ is obtained by applying integration by parts, i.e. $\int_{a}^{b} u(t) v' (t)dt= u(b)v(b)-u(a)v(a)-\int_{a}^{b}u'(t)v(t)dt$ with $u= \ln(\phi)$  and $dv= f_{\Phi}(\phi)$. Then, after the corresponding substitutions and some simplifications, we arrive to~\eqref{eq:afep}.

\section{Proof of Proposition \ref{prop:throughput}}
\label{app:B}
The throughput of confidential transmission can be defined in terms of the probability of successful transmission and the target secrecy rate, and it can be expressed as
\begin{align}\label{eq:sectroughproof}
\nonumber \mathcal{T}=& \mathbb{P}_{\mathrm{ST}} R_S\\
%\nonumber =&\Pr \left(R_T\leq C_L\right)R_S\\
\nonumber =& \Pr\left(\Gamma_L \geq 2^{2 R_T}-1 \stackrel{\Delta}{=} \tau_2\right) R_S\\
\nonumber \stackrel{(b)}{\approx} & \Pr \left( \dfrac{\gamma_{\mathrm{R}}}{\gamma_{\mathrm{R}}+\gamma_{\mathrm{D}}} \min\{\gamma_{\mathrm{S}} g_{\mathrm{SR}},\left(\gamma_{\mathrm{R}}+\gamma_{\mathrm{D}}\right) g_{\mathrm{RD}}\}\geq \tau_2 \right) R_S \\
%\nonumber =& \Pr \left( \dfrac{\gamma_{\mathrm{R}} \gamma_{\mathrm{S}} g_{\mathrm{SR}}}{\gamma_{\mathrm{R}}+\gamma_{\mathrm{D}}} \geq \tau_2,\gamma_{\mathrm{R}} g_{\mathrm{RD}}\geq \tau_2 \right) R_S \\
\nonumber =& \Pr \left( g_{\mathrm{SR}} \geq  \dfrac{\tau_2\left(\gamma_{\mathrm{R}}+\gamma_{\mathrm{D}}\right)}{\gamma_{\mathrm{R}} \gamma_{\mathrm{S}} }, g_{\mathrm{RD}}\geq \dfrac{\tau_2 }{\gamma_{\mathrm{R}}}\right) R_S \\
=&R_S \left[1-F_{g_{\mathrm{SR}}}\left(\dfrac{\tau_2\left(\gamma_{\mathrm{R}}+\gamma_{\mathrm{D}}\right)}{\gamma_{\mathrm{R}} \gamma_{\mathrm{S}} }\right)\right] \left[1-F_{g_{\mathrm{RD}}}\left(\dfrac{\tau_2 }{\gamma_{\mathrm{R}}}\right)\right] ,
\end{align}
where $(b)$ was obtained by considering the approximation in~\eqref{eq:snrDapprox}. Then, by considering that $g_{\mathrm{SR}}$ and $g_{\mathrm{RD}}$ are exponentially distributed random variables and after some simplifications, $\mathcal{T}$ is obtained as in~\eqref{eq:throughput}.
\section{Proof of Proposition~\ref{cor:thrfeas}}
\label{app:C}
The maximum achievable throughput of the confidential transmission can be obtained by solving the following optimization problem 
\begin{align*}
\max_{\mathrm{w.r.t.}\ R_S, R_T, \eta_1, \eta_2, \eta_3} \mathcal{T}
\end{align*}
\vspace{-5mm}
\begin{align}\label{eq:maxtrh}
\mathrm{s.t.}\ R_T\!\geq\! R_S>0, \eta_1\!>\!0, \eta_2\!>\!0, \eta_3\!>\!0, \eta_1+\eta_2+\eta_3\!=\!1.
\end{align}
From~\eqref{eq:throughput}, it can be noticed that $\mathcal{T}$ is a monotonically decreasing function over $R_T$, in its valid interval, thus we can consider that the value of $R_T$ that maximizes $\mathcal{T}$ is given by the minimum permitted value, i.e., $R_T=R_S$. Moreover, $\eta_1$ in~\eqref{eq:throughput} can be expressed as $\eta_1=1-\eta_2-\eta_3$, and,% it can be observed from Fig.~\ref{fig:opttretas} that 
 $\mathcal{T}$ is concave w.r.t. $\eta_2$, while it is monotonically decreasing w.r.t. $\eta_3$. Therefore, the optimization problem in~\eqref{eq:maxtrh} can be re-expressed as
%\begin{figure}[h]\centering
	%\includegraphics[width=0.5\linewidth]{./Figures/fig_opttretas.eps}
	%\caption{Throughput of confidential information for different values of $\eta_2$ and $\eta_3$.}
	%\label{fig:opttretas}
%\end{figure}
\begin{align*}
\max_{\mathrm{w.r.t.}\ R_S, \eta_2} \mathcal{T'} =\mathcal{T} \mid_{R_T=R_S, \eta_1=1-\eta_2-\eta_3}
\end{align*}
\vspace{-5mm}
\begin{align}\label{eq:maxtrhsimp}
\mathrm{s.t.}\ R_S>0,  \eta_2>0, \eta_3>0, \eta_2+\eta_3<1.
\end{align}
Hence, the optimal values for $R_S^*$ and $\eta_2^*$ that maximize the throughput are obtained by solving the following equations
\begin{align}\label{eq:partderivrs}
\nonumber \dfrac{\partial  \mathcal{T'}}{\partial R_S}=&\frac{\exp \left(\frac{\left(2^{2 R_S}-1\right) ((\eta_2+\eta_3) (\Omega_{\mathrm{RD}}-\Omega_{\mathrm{SR}})+\Omega_{\mathrm{SR}})}{\gamma_P \eta_2 \Omega_{\mathrm{RD}} \Omega_{\mathrm{SR}} (\eta_2+\eta_3-1)}\right)}{\gamma_P \eta_2 \Omega_{\mathrm{RD}} \Omega_{\mathrm{SR}} (\eta_2+\eta_3-1)}\\
\nonumber &\times \left[\gamma_P \eta_2 \Omega_{\mathrm{RD}} \Omega_{\mathrm{SR}} (\eta_2+\eta_3-1)+2^{2 R_S+1} R_S \ln(2) \right.\\
\nonumber&\times \left. ((\eta_2+\eta_3) (\Omega_{\mathrm{RD}}-\Omega_{\mathrm{SR}})+\Omega_{\mathrm{SR}})\right]=0\\
2^{2 R_S+1} R_S=&\dfrac{\gamma_P \eta_2 \Omega_{\mathrm{RD}} \Omega_{\mathrm{SR}} (\eta_2+\eta_3-1)}{\ln(2) ((\eta_2+\eta_3) (\Omega_{\mathrm{RD}}-\Omega_{\mathrm{SR}})+\Omega_{\mathrm{SR}})},
\end{align}
\begin{align}\label{eq:partderiveta2}
\nonumber \dfrac{\partial  \mathcal{T'}}{\partial \eta_2}=&-\frac{\exp \left(\frac{\left(2^{ 2 R_S}-1\right) ((\eta_2+\eta_3) (\Omega_{\mathrm{RD}}-\Omega_{\mathrm{SR}})+\Omega_{\mathrm{SR}})}{\gamma_P\eta_2 \Omega_{\mathrm{RD}} \Omega_{\mathrm{SR}} (\eta_2+\eta_3-1)}\right)}{\gamma_P\eta_2^2 \Omega_{\mathrm{RD}} \Omega_{\mathrm{SR}} (\eta_2+\eta_3-1)^2}\\
\nonumber&\times \left(2^{2 R_S}-1\right) R_S \left[\Omega_{\mathrm{RD}} \left((\eta_2+\eta_3)^2-\eta_3\right) \right.\\
\nonumber &\left. -\Omega_{\mathrm{SR}} (\eta_2+\eta_3-1)^2\right]=0\\
\nonumber 0=&\Omega_{\mathrm{RD}} \left((\eta_2+\eta_3)^2-\eta_3\right)\Omega_{\mathrm{SR}} (\eta_2+\eta_3-1)^2\\
\nonumber 0=&\eta_2^2 (\Omega_{\mathrm{RD}}-\Omega_{\mathrm{SR}})+2 \eta_2 (\eta_3 (\Omega_{\mathrm{RD}}-\Omega_{\mathrm{SR}})+\Omega_{\mathrm{SR}})\\
&+(\eta_3-1) (\eta_3 (\Omega_{\mathrm{RD}}-\Omega_{\mathrm{SR}})+\Omega_{\mathrm{SR}}).
\end{align}
Thus, after solving these equations and some simplifications, the expressions for $R_S^*$ and $\eta_2^*$ in~\eqref{eq:Rsmaxt} and~\eqref{eq:etamax} are obtained, which are the solutions that satisfy the constraints in~\eqref{eq:maxtrhsimp}.


\begin{thebibliography}{10}
	
	\bibitem{art:3gpp}
	{3GPP}, ``Study on new radio ({NR}) access technology physical layer aspects,''
	TR 38.802, Mar. 2017.
	
	\bibitem{art:wu2018}
	Y.~Wu, A.~Khisti, C.~Xiao, G.~Caire, K.~Wong, and X.~Gao, ``A survey of
	physical layer security techniques for 5g wireless networks and challenges
	ahead,'' \emph{IEEE J. Sel. Areas Commun.}, vol.~36, no.~4, pp. 679--695,
	April 2018.
	
	\bibitem{art:poor2017}
	
	H.~V. Poor and R.~F. Schaefer, ``Wireless physical layer security,''
	\emph{Proceedings of the National Academy of Sciences}, vol. 114, no.~1, pp.
	19--26, 2017. 
	
	\bibitem{art:shannon1949}
	C.~E. Shannon, ``Communication theory of secrecy systems,'' \emph{The Bell
		System Technical Journal}, vol.~28, no.~4, pp. 656--715, Oct. 1949.
	
	\bibitem{art:wyner}
	A.~D. Wyner, ``The wire-tap channel,'' \emph{The Bell System Technical
		Journal}, vol.~54, no.~8, pp. 1355--1387, Oct. 1975.
	
	\bibitem{art:maurer2000}
	U.~Maurer and S.~Wolf, ``Information-theoretic key agreement: From weak to
	strong secrecy for free,'' in \emph{Advances in Cryptology-{EUROCRYPT} 2000,
		Lecture Notes in Computer Science, ed Preneel B (Springer, Berlin)}, vol.
	1807, 2000, pp. 351--368.
	
	\bibitem{art:csiszar}
	I.~Csiszar and J.~Korner, ``Broadcast channels with confidential messages,''
	\emph{IEEE Trans. Inf. Theory}, vol.~24, no.~3, pp. 339--348, May 1978.
	
	\bibitem{art:leung}
	S.~Leung-Yan-Cheong and M.~Hellman, ``The gaussian wire-tap channel,''
	\emph{IEEE Trans. Inf. Theory}, vol.~24, no.~4, pp. 451--456, Jul. 1978.
	
	\bibitem{art:bloch2008}
	M.~Bloch, J.~Barros, M.~R.~D. Rodrigues, and S.~W. McLaughlin, ``Wireless
	information-theoretic security,'' \emph{IEEE Trans. Inf. Theory}, vol.~54,
	no.~6, pp. 2515--2534, June 2008.
	
	\bibitem{art:oggier}
	F.~Oggier and B.~Hassibi, ``The secrecy capacity of the mimo wiretap channel,''
	\emph{IEEE Trans. Inf. Theory}, vol.~57, no.~8, pp. 4961--4972, Aug 2011.
	
	\bibitem{art:alves2012}
	H.~Alves, R.~D. Souza, M.~Debbah, and M.~Bennis, ``Performance of transmit
	antenna selection physical layer security schemes,'' \emph{IEEE Signal
		Process. Lett.}, vol.~19, no.~6, pp. 372--375, June 2012.
	
	\bibitem{art:dong2010}
	L.~Dong, Z.~Han, A.~P. Petropulu, and H.~V. Poor, ``Improving wireless physical
	layer security via cooperating relays,'' \emph{IEEE Trans. Signal Process.},
	vol.~58, no.~3, pp. 1875--1888, March 2010.
	
	\bibitem{art:he2010}
	X.~He and A.~Yener, ``Cooperation with an untrusted relay: A secrecy
	perspective,'' \emph{IEEE Trans. Inf. Theory}, vol.~56, no.~8, pp.
	3807--3827, Aug 2010.
	
	\bibitem{art:Jeong2012}
	C.~{Jeong}, I.~{Kim}, and D.~I. {Kim}, ``Joint secure beamforming design at the
	source and the relay for an amplify-and-forward {MIMO} untrusted relay
	system,'' \emph{IEEE Trans. Signal Process.}, vol.~60, no.~1, pp. 310--325,
	Jan 2012.
	
	\bibitem{art:Mo2014}
	J.~{Mo}, M.~{Tao}, Y.~{Liu}, and R.~{Wang}, ``Secure beamforming for mimo
	two-way communications with an untrusted relay,'' \emph{IEEE Trans. Signal
		Process.}, vol.~62, no.~9, pp. 2185--2199, May 2014.
	
	\bibitem{art:he2009}

	X.~He and A.~Yener, ``Two-hop secure communication using an untrusted relay,''
	\emph{EURASIP Journal on Wireless Communications and Networking}, vol. 2009,
	no.~1, p. 305146, Nov 2009. 

	
	\bibitem{art:Sun2012}
	L.~{Sun}, T.~{Zhang}, Y.~{Li}, and H.~{Niu}, ``Performance study of two-hop
	amplify-and-forward systems with untrustworthy relay nodes,'' \emph{IEEE
		Trans. Veh. Technol.}, vol.~61, no.~8, pp. 3801--3807, Oct 2012.
	
	\bibitem{art:Wang2014}
	L.~{Wang}, M.~{Elkashlan}, J.~{Huang}, N.~H. {Tran}, and T.~Q. {Duong},
	``Secure transmission with optimal power allocation in untrusted relay
	networks,'' \emph{IEEE Wireless Commun. Lett.}, vol.~3, no.~3, pp. 289--292,
	June 2014.
	
	\bibitem{art:Kuhestani2016}
	A.~{Kuhestani} and A.~{Mohammadi}, ``Destination-based cooperative jamming in
	untrusted amplify-and-forward relay networks: resource allocation and
	performance study,'' \emph{IET Communications}, vol.~10, no.~1, pp. 17--23,
	2016.
	
	\bibitem{art:Kuhestani2018}
	A.~{Kuhestani}, A.~{Mohammadi}, K.~{Wong}, P.~L. {Yeoh}, M.~{Moradikia}, and
	M.~R.~A. {Khandaker}, ``Optimal power allocation by imperfect hardware
	analysis in untrusted relaying networks,'' \emph{IEEE Trans. Wireless
		Commun.}, vol.~17, no.~7, pp. 4302--4314, July 2018.
	
	\bibitem{art:osorio2018}
	D.~P.~M. Osorio, E.~E.~B. Olivo, and H.~Alves, ``Secrecy performance for
	multiple untrusted relay networks using destination-based jamming with direct
	link,'' in \emph{2018 IEEE 29th Annual International Symposium on Personal,
		Indoor and Mobile Radio Communications (PIMRC)}, Sep. 2018, pp. 1--5.
	
	\bibitem{art:lv2017}
	L.~Lv, J.~Chen, L.~Yang, and Y.~Kuo, ``Improving physical layer security in
	untrusted relay networks: cooperative jamming and power allocation,''
	\emph{IET Communications}, vol.~11, no.~3, pp. 393--399, 2017.
	
	\bibitem{art:Mamaghani2017}
	M.~T. {Mamaghani}, A.~{Mohammadi}, P.~L. {Yeoh}, and A.~{Kuhestani}, ``Secure
	two-way communication via a wireless powered untrusted relay and friendly
	jammer,'' in \emph{GLOBECOM 2017 - 2017 IEEE Global Communications
		Conference}, Dec 2017, pp. 1--6.
	
	\bibitem{art:Kuhestani20182}
	A.~{Kuhestani}, A.~{Mohammadi}, and P.~L. {Yeoh}, ``Security-reliability
	trade-off in cyber-physical cooperative systems with non-ideal untrusted
	relaying,'' in \emph{2018 IEEE 4th World Forum on Internet of Things
		(WF-IoT)}, Feb 2018, pp. 552--557.
	
	\bibitem{art:Xu2018}
	H.~{Xu} and L.~{Sun}, ``Encryption over the air: Securing two-way untrusted
	relaying systems through constellation overlapping,'' \emph{IEEE Trans.
		Wireless Commun.}, vol.~17, no.~12, pp. 8268--8282, Dec 2018.
	
	\bibitem{art:he}
	B.~He, X.~Zhou, and A.~L. Swindlehurst, ``On secrecy metrics for physical layer
	security over quasi-static fading channels,'' \emph{IEEE Trans. Wireless
		Commun.}, vol.~15, no.~10, pp. 6913--6924, Oct. 2016.
	
	\bibitem{art:lambert}

	E.~W. Weisstein. Lambert {W}-function. {F}rom {M}ath{W}orld--{A} {W}olfram web
	resource. [Online]. Available:
	\href{http://mathworld.wolfram.com/LambertW-Function.html}{http://mathworld.wolfram.com/LambertW-Function.html}
		
	\bibitem{book:gradshteyn}
	I.~Gradshteyn and I.~Ryzhik, \emph{Table of Integrals, Series and
		Products}.\hskip 1em plus 0.5em minus 0.4em\relax 7th edition. Academic
	Press, 2007.
	
	\bibitem{art:Kennedy1995}
	J.~Kennedy and R.~Eberhart, ``Particle swarm optimization,'' pp. 1942--1948,
	1995.
	
	\bibitem{art:Sun2015}
	L.~{Sun}, P.~{Ren}, Q.~{Du}, Y.~{Wang}, and Z.~{Gao}, ``Security-aware relaying
	scheme for cooperative networks with untrusted relay nodes,'' \emph{IEEE
		Commun. Lett.}, vol.~19, no.~3, pp. 463--466, March 2015.
	
	\bibitem{art:osorio}
	D.~P. Moya~Osorio, E.~E. Benitez~Olivo, D.~B. Costa, and J.~C. Silveira
	Santos~Filho, ``Distributed link selection in multirelay multiuser
	networks,'' \emph{Transactions on Emerging Telecommunications Technologies},
	vol.~27, no.~7, pp. 939--951, Mar. 2016.
	
\end{thebibliography}
\end{document}